\documentclass[10pt,twocolumn,twoside]{IEEEtran}
\def\BibTeX{{\rm B\kern-.05em{\sc i\kern-.025em b}\kern-.08em
    T\kern-.1667em\lower.7ex\hbox{E}\kern-.125emX}}

\usepackage{color,url}

\usepackage{amsmath,amssymb,amsfonts}
\usepackage{graphicx}
\usepackage{textcomp}

\usepackage{amsthm}

\usepackage{enumitem}

\usepackage{bm}
\usepackage{soul}
\usepackage{comment}

\usepackage{cancel}

\usepackage[linesnumbered,ruled,vlined]{algorithm2e}
\SetKwInput{KwInput}{Input}                
\SetKwInput{KwOutput}{Output}

\newtheorem{theorem}{Theorem}

\newtheorem{corollary}{Corollary}
\newtheorem{lemma}{Lemma}
\newtheorem{assumption}{Assumption}
\newtheorem{definition}{Definition}
\newtheorem{problem}{Problem}
\newtheorem{remark}{Remark}

\allowdisplaybreaks

\usepackage{cite}
\usepackage{accents}
\usepackage{tabularx}

\begin{document}
\title{
{\fontsize{21pt}{1pt}\selectfont
A Concentration-Based Approach for Optimizing the Estimation Performance in Stochastic Sensor Selection}
}

\author{Christopher I. Calle, \textit{Student Member}, \textit{IEEE}, and  Shaunak D. Bopardikar, \textit{Senior Member}, \textit{IEEE} 
\thanks{The authors are with the Department of Electrical and Computer Engineering, Michigan State University (MSU), East Lansing, MI 48824, USA. Emails: \texttt{callechr@msu.edu, shaunak@egr.msu.edu} }
\thanks{This work was supported in part by NSF Grant \# ECCS-2030556, GAANN Grant \# P200A180025, the National GEM Fellowship, and MSU's University Enrichment Fellowship (UEF).}
}

\maketitle


\begin{abstract}
In this work, we consider a sensor selection drawn at random by a sampling with replacement policy for a linear time-invariant dynamical system subject to process and measurement noise. We employ the Kalman filter to estimate the state of the system. However, the statistical properties of the filter are not deterministic due to the stochastic selection of sensors. As a consequence, we derive concentration inequalities to bound the estimation error covariance of the Kalman filter in the semi-definite sense. Concentration inequalities provide a framework for deriving semi-definite bounds that hold in a probabilistic sense. Our main contributions are three-fold. First, we develop algorithmic tools to aid in the implementation of a matrix concentration inequality. Second, we derive concentration-based bounds for three types of stochastic selections. Third, we propose a polynomial-time procedure for finding a sampling distribution that indirectly minimizes the maximum eigenvalue of the estimation error covariance. Our proposed sampling policy is also shown to empirically outperform three other sampling policies: uniform, deterministic greedy, and randomized greedy. 

\end{abstract}


\begin{IEEEkeywords}
Concentration inequalities, Kalman filtering, Random matrix theory, Sensor selection
\end{IEEEkeywords}

\section{Introduction}

The selection of an optimum set of inputs or outputs is a ubiquitous and challenging problem. In the field of control theory, the inputs and outputs correspond to the actuators and sensors of a dynamical system, resp., and the goal typically consists of finding a selection of actuators or sensors that optimizes an objective function and also satisfies a set of constraints and performance criteria. Refer to \cite{van2001,taha2018time,siami2020deterministic} for a few works in actuator and sensor selection in the field of control theory. In this paper, we focus on finding a selection of sensors that optimizes the estimation performance of the Kalman filter for a linear time-invariant (LTI) dynamical system.

A naive approach for finding an optimum selection of sensors is to evaluate all possible combinations. Unfortunately, since the time complexity of an exhaustive search is factorial, its use is computationally intractable for many practical problems of interest. A tractable alternative to the naive approach is a greedy sampling policy, i.e., a sequential algorithm that chooses the locally optimum choice. One advantage of greedy algorithms is the assurances that exist when the optimization problem of interest is equivalent to the maximization~\cite{krause2014submodular} or minimization~\cite{iwata2008submodular} of a submodular function. We refer the reader to \cite{clark2016submodularity} for a survey on the role of submodularity and greedy algorithms in the estimation and control of systems.

However, since many standard metrics in Kalman filtering do not exhibit submodularity as shown in \cite{jawaid2015submodularity} and \cite{zhang2017sensor}, the appealing guarantees associated with greedy algorithms, such as \cite{nemhauser1978analysis}, no longer hold. Several works attempt to justify the general utility of greedy as a benchmark heuristic in Kalman filtering. For instance, the greedy algorithm is shown to achieve near-optimal mean-square estimation error in \cite{kohara2020sensor} and \cite{hartman2019near}, and non-submodular metrics, such as the maximum eigenvalue and trace of the estimation error covariance, are also shown in \cite{chamon2020approximate} and \cite{hashemi2021randomized} to be approximately submodular. Though the classic greedy algorithm of \cite{nemhauser1978analysis} and variants of it, e.g., \cite{mirzasoleiman2015lazier} and \cite{han2021stochastic}, are of great importance due to their guarantees in submodular and even non-submodular settings, the utility of such guarantees ultimately hinges on how close to submodular the problem at hand is.

In this paper, we show that concentration inequalities (CIs), in contrast to greedy, can provide useful guarantees irrespective of submodularity for the problem of sensor selection in Kalman filtering. Though CIs offer a convenient probabilistic framework for obtaining non-trivial bounds, referred to as concentration bounds, its application in state estimation is limited and relatively unexplored. Several works that employ CIs include \cite{hashemi2021randomized}, \cite{bopardikar2019randomized}, and \cite{bopardikar2021randomized}, where concentration bounds are derived for the curvature measure in \cite{hashemi2021randomized} using the Bernstein inequality and for several observability Gramian metrics in \cite{bopardikar2019randomized} and \cite{bopardikar2021randomized} using a matrix CI, referred to as the Ahlswede-Winter (AW) inequality in \cite{bopardikar2021randomized}. The matrix CI is only made possible by the influential work \cite{ahlswede2002strong} of Ahlswede and Winter. Since \cite{bopardikar2021randomized} is the first work to apply the AW inequality in a non-trivial manner, we highlight certain aspects of it. In \cite{bopardikar2021randomized}, a Gramian lower bound is derived in terms of the Gramian of the original system for sensor placement at each location. However, since the lower bound is derived for a scaled version of the original output model, the guarantees do not hold for the observability Gramian of the original system. A major distinction between \cite{bopardikar2021randomized} and our previous work \cite{calle2021probabilistic} is that the latter work provided guarantees for the original system by recognizing the full utility of the AW inequality. Another major distinction is that \cite{calle2021probabilistic} addressed sensor selection for state estimation in the presence of process and measurement noise. In this work, we further exploit the AW inequality.







{\textit{Contributions:}} Our main contributions are three-fold. First, we develop algorithms, i.e., the first two in this work, that can find a set of parameters that satisfies the conditions required to use the AW inequality, referred to as Theorem~\ref{theorem:AW} in this work. Our algorithms aid in the implementation of the AW inequality and are novel contributions in the field of CIs.

Second, we derive novel CIs for three types of stochastic selections that we refer to as homogeneous, heterogeneous, and constrained in Section~\ref{subsection:type_of_selections}. To the best of our knowledge, we present the first CIs to bound the estimation error covariance of the Kalman filter (in the semi-definite sense) for a sensor selection drawn by a sampling \textit{with replacement} policy, i.e., a policy that draws each sensor at random and \textit{with replacement} from a pool of candidate sensors. The constrained selection accounts for sampling constraints, i.e., constraints on the number of times each candidate sensor can be chosen.


Third, we propose a procedure for finding a sampling distribution that minimizes the worst-case estimation performance of the Kalman filter. We formulate our search algorithm for the homogeneous selection, i.e., a selection that considers every candidate sensor when it is drawn by a sampling \textit{with replacement} policy. We also formulate a decentralized version of the search algorithm for the heterogeneous selection, i.e., an ensemble of mutually exclusive homogeneous selections.


In our previous work \cite{calle2021probabilistic}, we derived a CI for the steady-state error covariance of a homogeneous selection, and we proposed a procedure for finding a sampling policy that heuristically optimized the steady-state estimation performance of the Kalman filter. In this work, we derive CIs for the steady-state and filtered error covariance for the three type of stochastic selections in Section~\ref{subsection:sampling_policy}, and we propose a procedure for finding a sampling distribution that optimizes the estimation performance of the Kalman filter for the homogeneous and heterogeneous selection. We also improve the algorithm proposed in \cite{calle2021probabilistic} by significantly reducing the search space.

{\textit{Paper Organization:}}  
We outline the system model and our sampling policies in Section~\ref{section:problem_formulation}. In Section~\ref{section:main_results}, we obtain CIs for three types of stochastic selections and propose a procedure for finding a sampling distribution that optimizes estimation performance. In Section~\ref{section:simulation_results}, we offer insight into our guarantees with a numerical analysis. In Section~\ref{section:conclusion}, we summarize our findings and outline future directions of research. We provide proofs of our main results in the Appendix.

\section{Problem Formulation}
\label{section:problem_formulation}

\subsection{Notation}

We outline the convention in notation. Let $\mathbb{R}$ and $\mathbb{N}$ denote the set of real and natural numbers, resp., $I_n$ and $0_n$ denote the identity and null matrix of order $n$, resp., $\overline{\lambda} ( \cdot )$ and $\underline{\lambda}( \cdot )$ denote the maximum and minimum eigenvalue of a symmetric matrix argument, resp., and $\Delta^{n}$ denote the probability simplex in $\mathbb{R}^n$. We refer to $\mathbb{S}^{n}$, $\mathbb{S}_{+}^{n}$, and $\mathbb{S}_{++}^{n}$ as the set of symmetric, positive semi-definite (p.s.d.), and positive definite (p.d.) matrices of order $n$, resp. The operators $\succeq$ and $\succ$ hold for any matrices $A, B \in \mathbb{S}^{n}$, where the inequalities $A \succeq B$ and $A \succ B$ hold if $A-B \in \mathbb{S}_{+}^{n}$ and $A-B \in \mathbb{S}_{++}^{n}$, resp. The former and latter inequalities are said to hold in the semi-definite and definite sense, resp. Let $\{ n_1 , \ldots , n_2 \}$ denote the set of integers that include $n_1 \in \mathbb{N}$, $n_2 \in \mathbb{N}$, and the integers that range from $n_1$ to $n_2$. The notation $[n]$ is shorthand for $\{ 1 , \ldots , n \}$.

\subsection{System Model}

We refer to the sensors under consideration as \emph{candidate} sensors. The output of each of the $n_c \in \mathbb{N}$ candidate sensors is described by a linear time-invariant (LTI) model, i.e., 
\begin{align*}
y_{(t),i} = \bm{c}_i^T x_{(t)} + \bm{v}_{(t),i},
\end{align*}
where the $i$-th candidate sensor outputs the measurement $y_{(t),i} \in \mathbb{R}$ at time instant $t$. Let $x_{(t)} \in \mathbb{R}^m$ denote the latent state, $\bm{c}_i \in \mathbb{R}^m$ denote a mapping from the state $x_{(t)}$ to the uncorrupted output $\bm{c}_{i}^{T} x_{(t)}$, and $\bm{v}_{(t),i} \sim \mathcal{N}(0,\bm{\sigma}_i^2)$ denote the measurement noise. We assume $\bm{v}_{(t),i}$ is Gaussian. By assuming each candidate sensor is described by an LTI model corrupted by Gaussian noise, it is possible to identify a candidate sensor by its corresponding pair $(\bm{c}_i,\bm{\sigma}_i^2)$. To ensure that each candidate sensor in our sampling pool is \textit{unique}, we assume the pairs for any two candidate sensors are not equivalent, i.e., $(\bm{c}_i,\bm{\sigma}_i^2) \neq (\bm{c}_j,\bm{\sigma}_j^2)$ for all $i \in [ n_c ]$ and $j \in [ n_c ]/\{ i \}$. We denote the set of indices assigned to each candidate sensor as $\mathcal{I} := [n_c]$ for brevity in notation.

Though there exists $n_c$ unique candidate sensors in our sampling pool, a selection $\mathcal{S}$ of sensors can consist of multiple copies of the same candidate sensor. Let $\mathcal{S} \in \{ \hspace{0.25mm} \mathcal{I} \hspace{0.25mm} \}^{\, n_s}$ denote a sequence of $n_s$ indices. Any quantity conditioned on the selection $\mathcal{S}$ is appended by a subscript $\mathcal{S}$. This convention on notation applies to any selection introduced in this work.

We assume the system under consideration is an LTI model corrupted by process and measurement noise, i.e.,
\begin{align}
\label{eqn:system_model_lti}
\begin{split}
x_{(t+1)} &= A \, x_{(t)} + w_{(t)},   \\
y_{(t)} &= C_{\mathcal{S}} \, x_{(t)} + v_{\mathcal{S},(t)},
\end{split}
\end{align}
where $m$ and $n_s$ denote the state dimension and the number of measurements outputted by the selection~$\mathcal{S}$ of sensors, resp. Let $x_{(t)} \in \mathbb{R}^m$ and $y_{(t)} \in \mathbb{R}^{n_s}$ denote the state and output vector at time instant $t$, resp., $A \in \mathbb{R}^{m \times m}$ and $C_{\mathcal{S}} \in \mathbb{R}^{n_s \times m}$ denote the state and output matrix, resp., and $c_{i}^{T}$ denote the $i$-th row of $C_{\mathcal{S}}$. Let the initial state $x_{(0)}$, process noise $w_{(t)} \in \mathbb{R}^{m}$, and measurement noise $v_{\mathcal{S},(t)} \in \mathbb{R}^{n_s}$ denote Gaussian random variables, i.e., $x_{(0)} \sim \mathcal{N}( \overline{x}_{(0)},\Sigma_{(0)})$, $w_{(t)} \sim \mathcal{N}( 0 , Q )$, and $v_{\mathcal{S},(t)} \sim \mathcal{N}(0,R_{\mathcal{S}})$. We assume $x_{(0)}$, $\{ w_{(t)} \}_{t=0}^{\infty}$, and $\{ v_{\mathcal{S},(t)} \}_{t=0}^{\infty}$ are mutually uncorrelated. Note that the noise covariance matrices $Q$ and $R_{\mathcal{S}}$ are time-invariant.


We highlight two observations regarding the selection $\mathcal{S}$. First, the assumption that $\mathcal{S}$, i.e., the sensors corresponding to LTI model~\eqref{eqn:system_model_lti}, is time-invariant implies that the covariance matrix $R_{\mathcal{S}}$ is also time-invariant. Second, since each sensor in $\mathcal{S}$ is independent of each other, the sequence $v_{\mathcal{S},(t)}$ is uncorrelated for any time instant $t$. To clarify, $v_{\mathcal{S},(t),i} \sim \mathcal{N}( 0 , \sigma_{i}^{2} )$ is the measurement noise for the $i$-th sensor. 

The former observations imply that $R_{\mathcal{S}}$ is a diagonal matrix, where $\sigma_{i}^{2}$ is the $i$-th diagonal element of~$R_{\mathcal{S}}$. Also, the $i$-th sensor of \eqref{eqn:system_model_lti} is identified by the pair $( c_{i} , \sigma_{i}^{2} )$, where
\begin{align}
\label{eqn:pair}
( c_{i} , \sigma_{i}^{2} ) := \{ \, (\bm{c}_{j},\bm{\sigma}_{j}^2) : j = \mathcal{S}_{i} \, \}, \ \forall i \in [n_s].
\end{align}

We highlight a third observation. By assuming the variance of the measurement noise $\bm{v}_{(t),i}$ of each candidate sensor is positive, i.e., $\bm{\sigma}_i^2 > 0$ for all $i \in \mathcal{I}$, it implies the covariance matrix $R_{\mathcal{S}}$ is also p.d. 
We outline the implications from our former observations in Assumption~\ref{assumption:R}.


\begin{assumption}
\label{assumption:R}
The covariance matrix $R_{\mathcal{S}}$ of selection~$\mathcal{S}$ is p.d., i.e., $R_{\mathcal{S}} \in \mathbb{S}_{++}^{n_s}$, and diagonal.
\end{assumption}



Throughout this paper, we require the condition outlined in Assumption~\ref{assumption:01} for the proofs of several of our main results.


\begin{assumption}
\label{assumption:01}
The covariance matrix $Q \in \mathbb{S}_{++}^{m}$.
\end{assumption}

We introduce the matrix functions of Definition~\ref{def:functions} to compactly state the results of this paper.


\begin{definition}
\label{def:functions}
Assume $\Lambda, \Theta \in \mathbb{S}_{+}^{m}$, $\hat{\Lambda} \in \mathbb{S}_{++}^{m}$, $\Xi \in \mathbb{R}^{n_s \times m}$, $\Gamma \in \mathbb{S}_{+}^{n_s}$, and $\hat{\Gamma} \in \mathbb{S}_{++}^{n_s}$. Define the following matrix functions,
\begin{gather*}
f_1( \Lambda , \Xi , \Gamma ) := \Lambda - \Lambda \, \Xi^T \, ( \, \Gamma + \Xi \, \Lambda \, \Xi^T \, )^{-1} \, \Xi \, \Lambda, \\
f_2( \Lambda , \Theta ) := ( \, ( \hspace{0.25mm} A \hspace{0.25mm} \Lambda \hspace{0.25mm} A^T + Q \hspace{0.25mm} )^{-1} + \Theta \, )^{-1}, \\
f_3( \hat{\Lambda} , \Xi , \hat{\Gamma} ) :=  ( \, \hat{\Lambda}^{-1} + \Xi^T \, \hat{\Gamma}^{-1} \, \Xi \, )^{-1}, \\
f_4( \Lambda ) := A \hspace{0.25mm} \Lambda \hspace{0.25mm} A^T + Q.
\end{gather*}
\end{definition}

\vspace{-2.5mm}

\subsection{Kalman Filter}
\label{subsection:ss_KF}

We consider a Kalman filter, an optimal minimum mean-squared error (MMSE) estimator, that uses the measurements outputted by a selection~$\mathcal{S}$ of sensors. If the process and measurement noise of LTI model~\eqref{eqn:system_model_lti} do not satisfy the Gaussian assumption, then the filter is the optimal linear MMSE estimator \cite{simon2006optimal}. Let $P_{\mathcal{S},(t)}$ and $\Sigma_{\mathcal{S},(t)}$ denote the filtered and predicted covariance matrix of the state estimation error at time instant~$t$, resp. The covariance matrices $P_{\mathcal{S},(t)}$ and $\Sigma_{\mathcal{S},(t)}$ propagate in time according to the following equations,
\begin{align*}
P_{\mathcal{S},(t)} &= f_1( \Sigma_{\mathcal{S},(t)} , C_{\mathcal{S}} , R_{\mathcal{S}} ), \, \Sigma_{\mathcal{S},(t+1)} = f_4( P_{\mathcal{S},(t)} ),
\end{align*}
resp., for all $t \geq 0$. By applying Assumption~\ref{assumption:01} and assuming the output $y_{(t)}$ of LTI model~\eqref{eqn:system_model_lti} is accessible at time instant $t+1$, the above covariance equations can be stated in terms of the filtered error covariance at two different time instants $t$ and $t+1$, i.e., $P_{\mathcal{S},(t+1)} = f_2( P_{\mathcal{S},(t)} , C_{\mathcal{S}}^{T} R_{\mathcal{S}}^{-1} C_{\mathcal{S}}^{} )$.

As a consequence of Assumption~\ref{assumption:R}, the matrix $C_{\mathcal{S}}^{T} R_{\mathcal{S}}^{-1} C_{\mathcal{S}}^{}$ is equivalent to a sum of p.s.d. matrices, i.e., 
\begin{align*}
C_{\mathcal{S}}^{T} R_{\mathcal{S}}^{-1} C_{\mathcal{S}}^{} = \textstyle\sum\nolimits_{\hspace{0.25mm} i \in [n_s]} Z_{i} = \textstyle\sum\nolimits_{\hspace{0.25mm} i \in \mathcal{S}} \mathcal{Z}_{i},
\end{align*}
where the quantities $Z_i$ and $\mathcal{Z}_i$ are defined as the following,
\begin{align*}
Z_i := \sigma_{i}^{-2} c_{i}^{} c_{i}^{T}, \, \mathcal{Z}_i := \bm{\sigma}_{i}^{-2} \bm{c}_{i}^{} \bm{c}_{i}^{T}
\end{align*}
for all $i \in [n_s]$ and $i \in \mathcal{I}$, resp. The selection~$\mathcal{S}$ assigns each $Z_i$ to its corresponding $\mathcal{Z}_i$ according to \eqref{eqn:pair}.

\subsection{Sampling Policy}
\label{subsection:sampling_policy}

We employ a sampling \textit{with replacement} policy that consists of independently choosing $n_s$ sensor \textit{with replacement} from the pool of $n_c$ candidate sensors according to the sampling distribution $p$. In other words, the $i$-th candidate sensor is chosen at random with probability $p_i$ for each sampled sensor.

If a selection $\mathcal{S}$ is drawn according to our sampling \textit{with replacement} policy, then the quantity $Z_i$ in Section~\ref{subsection:ss_KF} is a random variable, i.e., sampling a sensor equates to sampling a p.s.d. matrix. Thus, the sequence $( Z_i )_{i=1}^{n_s}$ consists of $n_s$ independent and identically distributed (i.i.d.) \textit{random} variables. Let $Z$ denote any random variable in the sequence $( Z_i )_{i=1}^{n_s}$ and $\{ \mathcal{Z}_i \}_{i \in \mathcal{I}}$ denote its support. We state the following statistical properties for clarity, i.e., $\mathbb{E}[Z_i] = \mathbb{E}[Z]$ for all $i \in [n_s]$,
\begin{align*}
\mathbb{E}[Z] = \textstyle\sum\nolimits_{\, i \in \mathcal{I}} \, p_i \, \mathcal{Z}_{i}, \ \mathbb{E}[ \, C_{\mathcal{S}}^{T} R_{\mathcal{S}}^{-1} C_{\mathcal{S}}^{} \, ] = n_s \, \mathbb{E}[Z].
\end{align*}

\begin{remark}
\label{remark:IV}
The inverse transform (IT) sampling of a categorical random variable can be employed to draw a sample according to our sampling \textit{with replacement} policy. If the IT method is employed, then the computational expense of drawing a selection $\mathcal{S}$ consists of exactly $n_s$ operations. Each operation consists of sampling a uniform random variable. Thus, it is computationally inexpensive to draw a selection $\mathcal{S}$.
\end{remark}

\vspace{-2.5mm}

\subsection{Concentration Bounds}
\label{subsection:bounds}

If a selection $\mathcal{S}$ of sensors is drawn at random for the state estimation of LTI model~\eqref{eqn:system_model_lti}, then the statistical properties, i.e., the covariance equations, of the Kalman filter in Section~\ref{subsection:ss_KF} are no longer deterministic. As a consequence, we employ a variant of the Ahlswede-Winter inequality~\cite{ahlswede2002strong}, a CI that bounds a sum of i.i.d. and p.s.d. random matrices, to bound the estimation error covariance of the Kalman filter in the semi-definite sense. Corollary 2.2.2 of \cite{qiu2014sums} is stated as Theorem~\ref{theorem:AW}.


\begin{theorem}
\label{theorem:AW}
Let $( Z_i )_{i=1}^{n_s}$ denote a sequence of $n_s$ i.i.d. and p.s.d. random matrices, i.e., $Z \in \mathbb{S}_{+}^{m}$, satisfying the inequality $Z\preceq \rho \, \mathbb{E}[Z]$ almost surely for a scalar $\rho \geq 1$. If the equality
\begin{align}
\label{eqn:epsilon}
\epsilon^2 / \rho = c_0 := ( 4 / n_s ) \log_{e}{( 2m / \delta )}
\end{align}
holds for the scalars $\delta \in (0,1)$ and $\epsilon \in (0,1)$, then the event
\begin{align*}
\left\{ \, (1-\epsilon) \, n_s \, \mathbb{E}[Z] \preceq \textstyle\sum\nolimits_{\, i \in [n_s]} Z_i \preceq (1+\epsilon) \, n_s \, \mathbb{E}[Z] \, \right\}
\end{align*}
occurs at least with probability $(1-\delta)$.
\end{theorem}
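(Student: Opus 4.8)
The plan is to establish Theorem~\ref{theorem:AW} by the matrix Laplace-transform (Chernoff) method underlying the Ahlswede--Winter inequality: bound the upper and lower semidefinite tails separately and combine them by a union bound.

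First I would reduce to the isotropic case. Set $M := \mathbb{E}[Z]$. If $M$ is singular, then every $v \in \ker M$ satisfies $v^T Z v \le \rho\, v^T M v = 0$ together with $v^T Z v \ge 0$, hence $Z v = 0$ almost surely, so both sides of the asserted inclusion are supported on $\mathrm{range}(M)$; working there we may assume $M \in \mathbb{S}_{++}^m$ (this only replaces $m$ by $\mathrm{rank}(M) \le m$ in the final bound, which is harmless). Conjugating by $M^{-1/2}$, put $Y_i := M^{-1/2} Z_i M^{-1/2}$, so that $Y_i \in \mathbb{S}_{+}^m$, $\mathbb{E}[Y_i] = I_m$, and $Y_i \preceq \rho\, I_m$ almost surely. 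Writing $S := \sum_{i \in [n_s]} Y_i$, the event in the statement is equivalent, via the order-preserving congruence $X \mapsto M^{-1/2} X M^{-1/2}$, to $\{(1-\epsilon)n_s I_m \preceq S \preceq (1+\epsilon)n_s I_m\}$, whose complement is $\{\overline{\lambda}(S) > (1+\epsilon)n_s\} \cup \{\underline{\lambda}(S) < (1-\epsilon)n_s\}$.

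For the upper tail, for any $\theta > 0$ I would use the matrix Chernoff bound $\Pr[\overline{\lambda}(S) \ge t] \le e^{-\theta t}\, \mathbb{E}\,\mathrm{tr}\, e^{\theta S}$ and control $\mathbb{E}\,\mathrm{tr}\, e^{\theta S}$ via the Golden--Thompson inequality $\mathrm{tr}\, e^{A+B} \le \mathrm{tr}(e^A e^B)$: peeling off the summands one at a time, using independence to replace each peeled factor by its expectation, and using $\mathrm{tr}(UV) \le \overline{\lambda}(V)\,\mathrm{tr}(U)$ for $U, V \in \mathbb{S}_{+}^m$, one gets $\mathbb{E}\,\mathrm{tr}\, e^{\theta S} \le m\, \overline{\lambda}(\mathbb{E}\, e^{\theta Y})^{n_s}$. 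Since $0 \preceq Y \preceq \rho\, I_m$ and $t \mapsto e^{\theta t}$ is convex, the chord bound over $[0,\rho]$ gives the operator inequality $e^{\theta Y} \preceq I_m + \rho^{-1}(e^{\theta \rho} - 1)\, Y$; taking expectations and using $1 + x \le e^x$ yields $\overline{\lambda}(\mathbb{E}\, e^{\theta Y}) \le \exp(\rho^{-1}(e^{\theta \rho} - 1))$. Substituting, choosing $\theta = \rho^{-1}\log(1+\epsilon)$, and invoking the elementary inequality $(1+\epsilon)\log(1+\epsilon) - \epsilon \ge \epsilon^2/4$ for $\epsilon \in (0,1)$ gives $\Pr[\overline{\lambda}(S) \ge (1+\epsilon)n_s] \le m\, e^{-n_s \epsilon^2/(4\rho)}$. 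The lower tail is symmetric: the same steps applied to $-S$, with $\theta = -\rho^{-1}\log(1-\epsilon) > 0$, the chord bound $e^{-\theta Y} \preceq I_m + \rho^{-1}(e^{-\theta\rho}-1)\,Y$, and $-(1-\epsilon)\log(1-\epsilon) - \epsilon \le -\epsilon^2/2 \le -\epsilon^2/4$, give the same bound $m\, e^{-n_s \epsilon^2/(4\rho)}$. Finally, the hypothesis $\epsilon^2/\rho = (4/n_s)\log(2m/\delta)$ forces $m\, e^{-n_s\epsilon^2/(4\rho)} = \delta/2$, so a union bound over the two tails bounds the failure probability by $\delta$, which is the claim.

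The main obstacle is the non-commutativity of the $Y_i$: one cannot simply multiply scalar moment-generating functions, so the argument rests entirely on the Golden--Thompson trace inequality to linearize $\mathbb{E}\,\mathrm{tr}\, e^{\theta \sum_i Y_i}$ into a product of single-matrix expectations $\mathbb{E}\, e^{\theta Y_i}$. The remaining ingredients — the convexity/chord operator bound on the interval $[0, \rho\, I_m]$ and the scalar estimates of $(1 \pm \epsilon)\log(1 \pm \epsilon)$ — are routine. Since the statement is quoted as Corollary~2.2.2 of \cite{qiu2014sums}, one could alternatively simply cite that reference.
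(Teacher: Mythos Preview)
Your argument is correct: it is the standard Ahlswede--Winter derivation via the matrix Laplace transform, Golden--Thompson peeling, the convexity/chord bound on $[0,\rho]$, and the elementary estimates $(1+\epsilon)\log(1+\epsilon)-\epsilon \ge \epsilon^2/4$ and $\epsilon + (1-\epsilon)\log(1-\epsilon) \ge \epsilon^2/2$ on $(0,1)$, which you have stated and used accurately. The reduction to the isotropic case and the handling of singular $\mathbb{E}[Z]$ by restricting to its range (which can only decrease the dimensional factor $m$) are also sound.

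The only comparison to make is that the paper does \emph{not} prove this result at all: Theorem~\ref{theorem:AW} is imported verbatim as Corollary~2.2.2 of \cite{qiu2014sums} and used as a black box throughout. Your final sentence already anticipates this. So what you have written is strictly more than what the paper supplies; the benefit is a self-contained derivation with explicit constants, at the cost of reproducing a now-standard argument that the authors chose to cite rather than repeat.
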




We define the quantity $\mathcal{T}$ as the the set consisting of the following conditions, $\rho \geq 1$, $\epsilon \in (0,1)$, and $p \in \Delta^{n_c}$, i.e., 
\begin{align}
\label{def:T}
\mathcal{T} := \{ \, \rho \geq 1 , \, \epsilon \in (0,1) , \, p \in \Delta^{n_c} \, \}.
\end{align}
Also, throughout this paper, the following inequalities,
\begin{align}
\label{eqn:Z_inequality}
\mathcal{Z}_j \preceq \rho \, \textstyle\sum\nolimits_{\, i \in \mathcal{I}} \, p_i \, \mathcal{Z}_i, \, \forall j \in \mathcal{I}.
\end{align}
are satisfied in place of satisfying $Z\preceq \rho \, \mathbb{E}[Z]$ almost surely since they are equivalent. We now identify regimes of the parameters for which the conditions of Theorem~\ref{theorem:AW} hold.


\begin{lemma}
\label{lemma:AW_conditions}
If the inequality $\varrho^{*} < c_0^{-1}$ holds, where $\varrho^{*}$ is the optimal value of the following convex program,
\begin{align}
\label{alg:varrho}
\min_{ \varrho \, \geq \, 1 , \, p \, \in \, \Delta^{n_c} } \, \varrho \ \, \mathrm{s.t.} \,
\begin{bmatrix}
\, \mathbb{E}[ Z ] & \bm{c}_j \, \, \\
\, \bm{c}_j^{T} & \rho \, \bm{\sigma}_j^2 \, \, 
\end{bmatrix}
\succeq 0, \ \forall j \in \mathcal{I},
\end{align}
then for any scalar $\epsilon \in [ \sqrt{\varrho^{*} c_0} , 1 )$ there exists a $\rho \in [ \varrho^{*} , c_0^{-1} )$ and a $p \in \Delta^{n_c}$ that satisfy the conditions \eqref{eqn:epsilon} and \eqref{eqn:Z_inequality}.
\end{lemma}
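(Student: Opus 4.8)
The plan is to recognize the convex program \eqref{alg:varrho} as a Schur-complement reformulation of ``minimize $\varrho$ over $(\varrho,p)$ subject to $\mathcal{Z}_j\preceq\varrho\sum_i p_i\mathcal{Z}_i$ for all $j$'', and then to read off the required $\rho$ and $p$ directly from a minimizer of that program.

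First I would fix $j\in\mathcal{I}$ and $p\in\Delta^{n_c}$ and write $\mathbb{E}[Z]=\sum_{i\in\mathcal{I}}p_i\mathcal{Z}_i$, which is p.s.d. as a convex combination of the p.s.d. matrices $\mathcal{Z}_i=\bm{\sigma}_i^{-2}\bm{c}_i\bm{c}_i^T$. Since $\bm{\sigma}_j^2>0$ (Assumption~\ref{assumption:R}) and $\varrho\geq 1$, the lower-right block $\varrho\bm{\sigma}_j^2$ is a strictly positive scalar, so pivoting the Schur complement on that block yields
\begin{align*}
\begin{bmatrix}\mathbb{E}[Z] & \bm{c}_j \\ \bm{c}_j^T & \varrho\bm{\sigma}_j^2\end{bmatrix}\succeq 0
\ \iff\ \mathbb{E}[Z]-\tfrac{1}{\varrho\bm{\sigma}_j^2}\bm{c}_j\bm{c}_j^T\succeq 0
\ \iff\ \mathcal{Z}_j\preceq\varrho\,\mathbb{E}[Z].
\end{align*}
Hence \eqref{alg:varrho} equals $\varrho^{*}=\min\{\varrho\geq 1:\ p\in\Delta^{n_c},\ \mathcal{Z}_j\preceq\varrho\sum_i p_i\mathcal{Z}_i\ \forall j\in\mathcal{I}\}$, and since each constraint is an LMI affine in $(\varrho,p)$ the program is convex, as asserted. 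I would then observe the feasible set is nonempty --- $p$ uniform with $\varrho=n_c$ works, because $\sum_i\mathcal{Z}_i\succeq\mathcal{Z}_j$ --- and that after the harmless restriction $\varrho\leq n_c$ it is compact, so the minimum is attained at some $(\varrho^{*},p^{*})$ with $\mathcal{Z}_j\preceq\varrho^{*}\sum_i p_i^{*}\mathcal{Z}_i$ for every $j$.

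For the construction, note $\varrho^{*}<c_0^{-1}$ is the same as $\varrho^{*}c_0<1$, so $[\sqrt{\varrho^{*}c_0},1)$ is nonempty; given any $\epsilon$ in it, set $\rho:=\epsilon^2/c_0$, so \eqref{eqn:epsilon} holds by construction. From $\epsilon\geq\sqrt{\varrho^{*}c_0}$ one gets $\rho\geq\varrho^{*}$, from $\epsilon<1$ one gets $\rho<c_0^{-1}$, and $\rho\geq\varrho^{*}\geq 1$; thus $\rho\in[\varrho^{*},c_0^{-1})$. Taking $p:=p^{*}$, the chain $\rho\sum_i p_i^{*}\mathcal{Z}_i\succeq\varrho^{*}\sum_i p_i^{*}\mathcal{Z}_i\succeq\mathcal{Z}_j$ holds for all $j$, the first $\succeq$ because $\rho-\varrho^{*}\geq 0$ and $\sum_i p_i^{*}\mathcal{Z}_i\succeq 0$ and the second by optimality of $(\varrho^{*},p^{*})$; this is exactly \eqref{eqn:Z_inequality}. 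The main thing to be careful about is the Schur-complement equivalence in the second step --- pivoting on the scalar block $\varrho\bm{\sigma}_j^2$, which is what lets us avoid any pseudoinverse/range conditions on $\mathbb{E}[Z]$ --- together with checking attainment of the minimum so that the left endpoint $\epsilon=\sqrt{\varrho^{*}c_0}$ is genuinely covered; everything past that is the one-line monotonicity argument just given.
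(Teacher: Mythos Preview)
Your proposal is correct and follows essentially the same approach as the paper: both identify the LMI constraints in \eqref{alg:varrho} as the Schur-complement form of \eqref{eqn:Z_inequality}, then use the relation $\rho=\epsilon^2/c_0$ together with the minimizer of the program to produce the required pair $(\rho,p)$. Your version is in fact more explicit than the paper's---you spell out the Schur-complement equivalence, verify feasibility and attainment of the minimum (so that the endpoint $\epsilon=\sqrt{\varrho^{*}c_0}$ is covered), and construct $p=p^{*}$ directly---whereas the paper only asserts the LMI equivalence and argues by deriving upper and lower bounds on $\rho$ and $\epsilon$ without exhibiting $p$.
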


\begin{proof}
In this proof, a feasible pair $(\rho,p)$ for Theorem~\ref{theorem:AW} is shown to exist given any $\epsilon$ on the interval $[ \sqrt{\varrho^{*} c_0} , 1 )$.


First, we derive an upper bound on $\rho$. Note that $\epsilon < 1$ and \eqref{eqn:epsilon} imply $\rho < c_0^{-1}$. Second, we compute a lower bound on $\rho$. Note that for an arbitrary $p \in \Delta^{n_c}$, there exists a minimum $\rho$ that satisfies \eqref{eqn:Z_inequality}. Note also that \eqref{eqn:Z_inequality} is equivalent to the $n_c$ LMI constraints in \eqref{alg:varrho}. By solving the convex program in Lemma~\ref{lemma:AW_conditions}, we find the minimum value of $\rho$, denoted as $\varrho^{*}$, that satisfies the conditions required to use Theorem~\ref{theorem:AW}. Note that $\varrho^{*} \leq \rho$ and $\rho < c_0^{-1}$ imply $\varrho^{*} < c_0^{-1}$. Thus, $\rho$ exists on the interval $[ \varrho^{*} , c_0^{-1} )$. Third, we derive a lower bound on $\epsilon$. Note that $\varrho^{*} \leq \rho$ and $\epsilon^2 = \rho \, c_0$ imply $\epsilon \geq \sqrt{\varrho^{*} c_0}$ $\Rightarrow$ $\epsilon > 0$ since $\varrho^{*} \geq 1$ and $c_0 > 0$. Thus, if a $\rho \in [ \varrho^{*} , c_0^{-1} )$ exists, then an $\epsilon \in [ \sqrt{\varrho^{*} c_0} , 1 )$ that satisfies the equality $\epsilon^2 = \rho \, c_0 \in (0,1)$ also exists. Note that a unique $\epsilon$ exists for every $\rho$. 

Similarly, if $\varrho^{*} < c_0^{-1}$, then an $\epsilon$ exists on the interval $[ \sqrt{\varrho^{*} c_0} , 1 )$. Thus, if $\epsilon \in [ \sqrt{\varrho^{*} c_0} , 1 )$ exists, then a $\rho \in [ \varrho^{*} , c_0^{-1} )$ that satisfies the equality $\epsilon^2 = \rho \, c_0 \in (0,1)$ also exists. 
\end{proof}

\begin{remark}
If $p$ is a fixed quantity, then $\varrho^{*} < c_0^{-1}$ if and only if $n_s > \underbar{n}_s := 4 \, \varrho^{*} \log{\frac{2m}{\delta}}$. Thus, Lemma~\ref{lemma:AW_conditions} also holds if $n_s > \underbar{n}_s$. The proof consists of applying the definition of $c_0$ and rearranging the inequality $\varrho^{*} < c_0^{-1}$ of Lemma~\ref{lemma:AW_conditions}.
\end{remark}

\textsc{Algorithm~\ref{alg:02}} outlines the procedure for finding a feasible $n_s$ when $p$ is given. In contrast, \textsc{Algorithm~\ref{alg:03}} outlines a general procedure for finding a feasible $p$ when $n_s$ is given. Both algorithms output quantities that satisfy the conditions required to use Theorem~\ref{theorem:AW}. Recall that $c_0$ is defined in \eqref{eqn:epsilon} and $\varrho^{*}$ is computed by executing the program~\eqref{alg:varrho} of Lemma~\ref{lemma:AW_conditions}.

\vspace{-0.5mm}

\begin{algorithm}
\DontPrintSemicolon
  \KwInput{$p$, $\delta$, $\{ \mathcal{Z}_i \}_{i \in \mathcal{I}}$}
  
  Compute $\varrho^{*}$; Choose $n_s \in \mathbb{N} : n_s > 4 \varrho^{*} \log{\frac{2m}{\delta}}$
  
  Select $\epsilon \in [ \sqrt{\varrho^{*} c_0} , 1 ) \subset (0,1)$; Compute $\rho : \rho = \epsilon^2 c_0^{-1}$
  
  \KwOutput{$n_s, \epsilon, \rho$}

\caption{Selection of the Sample Size}
\label{alg:02}
\end{algorithm}

\vspace{-7.5mm}

\begin{algorithm}
\DontPrintSemicolon
  \KwInput{$n_s$, $\delta$, $\{ \mathcal{Z}_i \}_{i \in \mathcal{I}}$}
  
  Compute $\varrho^{*}$, $c_0$; Select $\epsilon \in [ \sqrt{\varrho^{*} c_0} , 1 ) \subset (0,1)$
  
  Compute $\rho : \rho = \epsilon^2 c_0^{-1}$; Choose $p \in \Delta^{n_c} : \eqref{eqn:Z_inequality}$
  
  \KwOutput{$\epsilon, \rho, p$}

\caption{Selection of the Sampling Distribution}
\label{alg:03}
\end{algorithm}

\vspace{-3.5mm}

\subsection{Selection Types}
\label{subsection:type_of_selections}
Throughout this paper, we consider three types of selections. For a selection of the homogeneous type, every one of the $n_c$ candidate sensors is considered each time a sensor is chosen for that selection. Unless otherwise stated, a selection $\mathcal{S}$ of sensors is referred to as homogeneous, i.e., a selection drawn by \textsc{Algorithm~\ref{alg:homogeneous_SS}}, from this point onward.

\vspace{-1.5mm}

\begin{algorithm}
\DontPrintSemicolon
  \KwInput{$n_s$, $p$}

  Draw a selection $\mathcal{S}$ of $n_s$ sensors per the sampling \textit{with replacement} policy outlined in Section~\ref{subsection:sampling_policy}
  
  \KwOutput{$\mathcal{S}$}

\caption{Homogeneous Sensor Selection}
\label{alg:homogeneous_SS}
\end{algorithm}

\vspace{-2.5mm}

For a selection $\mathcal{H}$ of the heterogeneous type, the original sampling pool of $n_c$ candidate sensors is segmented into $K$ mutually exclusive partitions and a homogeneous selection $\mathcal{S}^{(i)}$ is drawn from each of the $i \in [K]$ partitions according to \textsc{Algorithm~\ref{alg:homogeneous_SS}}, i.e., each partition samples ${n_s}^{(i)}$ sensors \textit{with replacement} from their sampling pool of ${n_c}^{(i)}$ candidate sensors per the sampling distribution $p^{(i)} \in \Delta^{{n_c}^{(i)}}$. By fusing the stochastic selections drawn by each partition, we form a heterogeneous ensemble $\mathcal{H} := \bigcup_{i \in [K]} \mathcal{S}^{(i)}$ of sensors.

When each partition draws a homogeneous selection, the selection $\mathcal{S}^{(i)}$ is drawn from the set $\{ \hspace{0.25mm} \mathcal{I}^{\, (i)} \hspace{0.05mm} \}^{n_s^{(i)}}$, where $\mathcal{I}^{\, (i)} := \{ \phi^{(i)} ,\ldots , \varphi^{(i)} \} \subset \mathcal{I}$ denotes the collection of indices (of the candidate sensors) assigned to each partition, and
\begin{align*}
\phi^{(i)} := 1 + \textstyle\sum_{j=1}^{\, i} {n_c}^{(j-1)}, \, \varphi^{(i)} := \textstyle\sum_{j=1}^{\, i} {n_c}^{(j)}
\end{align*}
denote the initial and final index of the candidate sensors assigned to the $i$-th partition, resp. Note that ${n_c}^{(0)} = 0$ since ${n_c}^{(i)}$ is only physically defined for $i \in [K]$.


In the homogeneous setup, recall that random variable~$Z$ is assigned to the unpartitioned sampling pool of candidate sensors. In the heterogeneous setup, i.e., the partitioned case, the $i$-th partition is assigned the random variable $Z^{(i)}$, a random p.s.d. matrix. Let $\{ \mathcal{Z}_{i} \}_{i \in \mathcal{I}^{(i)}}$ denote~its support and $\mathbb{E}[Z^{(i)}]$ denotes its expectation, i.e.,
\begin{align*}
\mathbb{E}[Z^{(i)}] = \textstyle\sum\nolimits_{\, ( j , k ) \, \in \, ( \, [{n_c}^{(i)}] \, , \, \mathcal{I}^{(i)} \, )} \, {p_j}^{(i)} \, \mathcal{Z}_k.
\end{align*}


For a selection $\mathcal{C}$ of the constrained type, a limit $k_i$ is imposed on the number of times the $i$-th candidate sensor can be drawn \textit{with replacement}. Let $k \in \{ \mathbb{N} , 0 \}^{n_c}$ denote the list of constraints imposed on each of the $n_c$ candidate sensors.


\begin{definition}
\label{def:I}
Let the binomial random variable $I_i$ denote the number of times the $i$-th candidate sensor is chosen $\textit{with replacement}$ after $n_s$ sampling instances from the pool of $n_c$ candidate sensors.
\end{definition}

Let $\{ I_i \leq k_i \}$ denote the event that the $i$-th candidate sensor is chosen no more than $k_i$ times after $n_s$ independent trials. As a consequence of Definition~\ref{def:I}, the probability of selecting the $i$-th candidate sensor $k_i$ times after $n_s$ independent trials is equal to the following, 
\begin{align}
\label{eqn:C}
\mathbb{P}[ I_i = k_i ] = f_5 ( n_s, k_i, p_i ) := \binom{n_s}{k_i} {p_i}^{k_i} (1-{p_i})^{n_s-k_i}.
\end{align}

\begin{definition}
Let $\mathcal{K}$ denote the event that the $i$-th candidate sensor is chosen no more than $k_i$ times for all $ i \in [n_c]$, i.e., 
\begin{align*}
\mathcal{K} := \textstyle\bigcap_{\, i \in [n_c]} \{ I_i \leq k_i \},
\end{align*}
and $\mathcal{K}^c$ denote its complement, i.e., $\mathcal{K}^{c} := \textstyle\bigcup_{\, i \in [n_c]} \{ I_i \nleq k_i \}$.
\end{definition}

If a selection $\mathcal{S}$ satisfies the sampling constraints specified by the vector $k$, then the event~$\mathcal{K}$ is said to have occurred. In Section~\ref{subsection:constrained_SS}, we assume the following conditions.

\begin{assumption}
\label{assumption:k}
Define the quantities $k_m := \textstyle\max_{\, i \in [n_c]} \, k_i$ and $k_{\Sigma} := \textstyle\sum\nolimits_{\, i \in \mathcal{I}} \, k_i$. Assume $n_s \in [ k_m , k_{\Sigma} ]$ and the following,
\begin{align}
\label{eqn:k}
k_i = 0 \ \mathit{if} \ p_i = 0 \ \mathit{and} \ k_i \in \{ \mathbb{N} , 0 \} \ \mathit{if} \ p_i \neq 0, \ \forall i \in \mathcal{I}.
\end{align}
\end{assumption}

We comment on the conditions outlined in Assumption~\ref{assumption:k}. First, we assume $n_s \geq k_m$ to guarantee the binomial coefficient of the probability~\eqref{eqn:C} exists. Second, we assume $n_s \leq k_{\Sigma}$ to guarantee the event~$\mathcal{K}$ is physically realizable. Third, we assume \eqref{eqn:k} to guarantee that the sequence $k$ is well-defined.


\begin{definition}
\label{definition:RV_N}
Let $N$ denote the number of selections drawn~by \textsc{Algorithm~\ref{alg:constrained_SS}} prior to drawing a constrained selection $\mathcal{C}$.
\end{definition}

Any selection drawn by \textsc{Algorithm~\ref{alg:constrained_SS}} is referred to as constrained. Before drawing a selection~$\mathcal{C}$ that realizes event~$\mathcal{K}$, \textsc{Algorithm~\ref{alg:constrained_SS}} draws $N \in \mathbb{N}$ selections (independent of each other) since \textsc{Algorithm~\ref{alg:homogeneous_SS}} is not guaranteed to draw a homogeneous selection that also satisfies the sampling constraints specified by $k$. Since \textsc{Algorithm~\ref{alg:constrained_SS}} must verify that a homogeneous selection $\mathcal{S}$ satisfies its sampling constraints, the operator $\cancel{\circ}$ is introduced, where $\mathcal{S} \, \cancel{\circ} \, k$ means that the selection~$\mathcal{S}$ does not satisfy the sampling constraints of $k$.

\begin{algorithm}
\DontPrintSemicolon
  \KwInput{$k$, $n_s$, $p$}
  
  Initialize $\mathcal{C} \leftarrow [ \, ]$, $N \leftarrow 0$
  
  \While{$\mathcal{C} \, \cancel{\circ} \, k$ \textbf{or} $\mathcal{C} = \emptyset$}
  {
  
  Execute \textsc{Algorithm~\ref{alg:homogeneous_SS}}; Set $\mathcal{C} \hspace{-0.55mm} \leftarrow \hspace{-0.55mm} \mathcal{S}$; Set $N \hspace{-0.55mm} \leftarrow N \hspace{-0.55mm} + \hspace{-0.55mm} 1$

  }

  \KwOutput{$\mathcal{C}$, $N$}

\caption{Constrained Sensor Selection}
\label{alg:constrained_SS}
\end{algorithm}

\subsection{Problem Statement}
\label{subsection:problem_statement}

In this paper, we study the statistical properties of the Kalman filter for each selection in Section~\ref{subsection:type_of_selections}.

In Problem~\ref{problem:bounds_S}, the goal is to determine whether guarantees on estimation performance exist when we randomly choose a selection of sensors according to \textsc{Algorithm~\ref{alg:homogeneous_SS}}. We provide a solution to Problem~\ref{problem:bounds_S} in Section~\ref{subsection:homogeneous_SS} for the time-dependent and steady-state error covariance. Our solution is extended to a heterogeneous selection in Section~\ref{subsection:heterogeneous_SS}.

\begin{problem}
\label{problem:bounds_S}
Obtain concentration bounds on the estimation performance, i.e., the estimation error covariance of the Kalman filter, for a homogeneous selection $\mathcal{S}$ of sensors.
\end{problem}


In Problem~\ref{problem:bounds_C}, we quantify how the presence of sampling constraints affects our solution to Problem~\ref{problem:bounds_S}. We provide a solution in Section~\ref{subsection:constrained_SS} for the steady-state error covariance.

\begin{problem}
\label{problem:bounds_C}
Generalize the concentration bounds of a homogeneous selection $\mathcal{S}$ to account for sampling constraints.
\end{problem}

In Problem~\ref{problem:algo}, we find a sampling \textit{with} replacement policy that optimizes our ability to perform state estimation. We specify our sampling with replacement policy by specifying the sampling distribution $p$ of Section~\ref{subsection:sampling_policy}. 
We provide a solution to Problem~\ref{problem:algo} in Section~\ref{subsection:proposed}.

\begin{problem}
\label{problem:algo}
Find a sampling distribution that can draw a sensor selection that optimizes estimation performance.
\end{problem}

\section{Main Results}
\label{section:main_results}

Our main results consist of novel concentration bounds on the error covariance of the Kalman filter for three types of stochastic selections and an algorithm for finding a sampling policy that is optimal with respect to estimation performance.

\subsection{Homogeneous Sensor Selection}
\label{subsection:homogeneous_SS}

First, we quantify the estimation performance that we achieve in the probabilistic sense for a homogeneous selection~$\mathcal{S}$ of sensors. We formulate a CI for the filtered error covariance $P_{\mathcal{S},(t)}$ in Theorem~\ref{theorem:SS_bounds_t}.

\begin{theorem}
\label{theorem:SS_bounds_t}
Let $\Sigma_{(t)} \in \mathbb{S}_{++}^{m}$ denote the predicted error covariance at time instant $t$. Define the quantities,
\begin{gather*}
U_{S,(t)} := ( \, \Sigma_{(t)}^{-1} + (1-\epsilon) \, n_s \, \mathbb{E}[Z] \, )^{-1} \in \mathbb{S}_{++}^{m},   \\
L_{S,(t)} := ( \, \Sigma_{(t)}^{-1} + (1+\epsilon) \, n_s \, \mathbb{E}[Z] \, )^{-1} \in \mathbb{S}_{++}^{m},
\end{gather*}
for the time instant $t$. Let $P_{\mathcal{S},(t)} \in \mathbb{S}_{++}^{m}$ denote the filtered error covariance of selection $\mathcal{S}$ at time instant $t$, such that
\begin{align}
\label{eqn:P_S_t_02}
P_{\mathcal{S},(t)} &= ( \, \Sigma_{(t)}^{-1} + C_{\mathcal{S}}^{T} R_{\mathcal{S}}^{-1} C_{\mathcal{S}}^{} \, )^{-1}.
\end{align}
Then, the event $\{ \hspace{0.25mm} L_{S,(t)} \preceq P_{\mathcal{S},(t)} \preceq U_{S,(t)} \hspace{0.25mm} \}$ at time instant $t$ holds at least with probability $(1-\delta)$, i.e.,
\begin{align}
\label{eqn:concentration_inequality}
\mathbb{P} [ \, L_{S,(t)} \preceq P_{\mathcal{S},(t)} \preceq U_{S,(t)} \, ] \geq (1 - \delta).
\end{align}
\end{theorem}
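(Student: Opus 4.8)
The plan is to combine the matrix concentration inequality of Theorem~\ref{theorem:AW} with the operator antitonicity of the map $g(X) := ( \Sigma_{(t)}^{-1} + X )^{-1}$ on the positive semi-definite cone. First I would recall the decomposition from Section~\ref{subsection:ss_KF}, namely $C_{\mathcal{S}}^{T} R_{\mathcal{S}}^{-1} C_{\mathcal{S}}^{} = \sum_{i \in [n_s]} Z_i$, where, under the sampling \textit{with replacement} policy of Section~\ref{subsection:sampling_policy}, $( Z_i )_{i=1}^{n_s}$ is a sequence of $n_s$ i.i.d.\ p.s.d.\ random matrices with $\mathbb{E}[Z_i] = \mathbb{E}[Z]$. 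Assuming the parameters $\rho$, $\epsilon$, $p$, and $n_s$ satisfy the hypotheses of Theorem~\ref{theorem:AW} (which is exactly the regime that Lemma~\ref{lemma:AW_conditions} together with \textsc{Algorithm~\ref{alg:02}} and \textsc{Algorithm~\ref{alg:03}} guarantees can be arranged), Theorem~\ref{theorem:AW} gives that the (local) event
\begin{align*}
\mathcal{E} := \bigl\{\, (1-\epsilon)\, n_s\, \mathbb{E}[Z] \preceq C_{\mathcal{S}}^{T} R_{\mathcal{S}}^{-1} C_{\mathcal{S}}^{} \preceq (1+\epsilon)\, n_s\, \mathbb{E}[Z] \,\bigr\}
\end{align*}
occurs with probability at least $(1-\delta)$.

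Next I would use the fact that matrix inversion is operator antitone (order-reversing) on $\mathbb{S}_{++}^{m}$ while the affine shift $X \mapsto \Sigma_{(t)}^{-1} + X$ is operator monotone, so their composition $g$ is operator antitone on $\mathbb{S}_{+}^{m}$; since $\Sigma_{(t)} \in \mathbb{S}_{++}^{m}$, we have $\Sigma_{(t)}^{-1} + X \in \mathbb{S}_{++}^{m}$ for every $X \in \mathbb{S}_{+}^{m}$, so $g$ is well-defined with values in $\mathbb{S}_{++}^{m}$ (this simultaneously confirms $U_{S,(t)}, L_{S,(t)}, P_{\mathcal{S},(t)} \in \mathbb{S}_{++}^{m}$, using $1-\epsilon > 0$ for $U_{S,(t)}$). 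Applying $g$ to the two-sided bound in $\mathcal{E}$, with $X = C_{\mathcal{S}}^{T} R_{\mathcal{S}}^{-1} C_{\mathcal{S}}^{}$ and observing $g\bigl((1-\epsilon) n_s \mathbb{E}[Z]\bigr) = U_{S,(t)}$, $g\bigl((1+\epsilon) n_s \mathbb{E}[Z]\bigr) = L_{S,(t)}$, and $g(X) = P_{\mathcal{S},(t)}$ by \eqref{eqn:P_S_t_02}, the antitonicity flips both inequalities and yields $L_{S,(t)} \preceq P_{\mathcal{S},(t)} \preceq U_{S,(t)}$ on $\mathcal{E}$. Hence $\mathcal{E} \subseteq \{\, L_{S,(t)} \preceq P_{\mathcal{S},(t)} \preceq U_{S,(t)} \,\}$, and monotonicity of probability gives \eqref{eqn:concentration_inequality}.

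The routine computation here is minimal, so there is no single hard step; the points requiring care are essentially bookkeeping. First, one must check that all relevant matrices are strictly positive definite so that the inverses exist and $g$ is genuinely antitone — this follows from $\Sigma_{(t)} \in \mathbb{S}_{++}^{m}$, $R_{\mathcal{S}} \in \mathbb{S}_{++}^{n_s}$ (Assumption~\ref{assumption:R}), and $\epsilon \in (0,1)$. Second, one must be explicit that the parameter regime assumed throughout (via Lemma~\ref{lemma:AW_conditions}) is precisely what licenses invoking Theorem~\ref{theorem:AW} with the quantities available at time $t$. I expect the main subtlety to be phrasing the result as a genuine event inclusion $\mathcal{E} \subseteq \{ L_{S,(t)} \preceq P_{\mathcal{S},(t)} \preceq U_{S,(t)} \}$, so that the probability bound transfers directly and no additional independence or union-bound argument is needed beyond what Theorem~\ref{theorem:AW} already provides.
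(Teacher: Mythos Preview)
Your proposal is correct and follows essentially the same approach as the paper: apply Theorem~\ref{theorem:AW} to obtain the two-sided semi-definite bound on $C_{\mathcal{S}}^{T} R_{\mathcal{S}}^{-1} C_{\mathcal{S}}^{}$ with probability at least $1-\delta$, and then use the order-reversing property of matrix inversion (together with $\Sigma_{(t)} \in \mathbb{S}_{++}^m$) to convert this into the bound $L_{S,(t)} \preceq P_{\mathcal{S},(t)} \preceq U_{S,(t)}$. The only difference is packaging: the paper spells out the antitonicity step explicitly inside a general ``two-filter comparison'' corollary (Corollary~\ref{corollary:KF_01}), which it then specializes to three filters with a common $\Sigma_{(t)}$, whereas you invoke operator antitonicity of $g(X) = (\Sigma_{(t)}^{-1}+X)^{-1}$ directly; the paper's extra generality is there because the same machinery is reused for the steady-state result (Theorem~\ref{theorem:SS_bounds}), but for Theorem~\ref{theorem:SS_bounds_t} alone your streamlined version is entirely adequate.
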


We comment on the appeal of semi-definite guarantees.


\begin{remark}
\label{remark:bounds}
%
%
%
%
%
Bounds (in the semi-definite sense) on the estimation error covariance are appealing because they imply assurances on standard eigenvalue-based metrics in Kalman filtering, e.g., the maximum eigenvalue, trace, determinant, and condition number of the estimation error covariance. The fact that every CI in this work guarantees semi-definite bounds in the probabilistic sense is why our CIs are noteworthy.
\end{remark}


In Theorem~\ref{theorem:SS_bounds}, we formulate the steady-state performance of the time-dependent quantities in Theorem~\ref{theorem:SS_bounds_t}.

\begin{theorem}
\label{theorem:SS_bounds}
Assume $(A, \mathbb{E}[Z]^{1/2})$ and $(A,C_{\mathcal{S}})$ are detectable for any selection $\mathcal{S} \in \mathcal{I}^{\, n_s}$. Let $P_{\mathcal{S}}$ denote the steady-state error covariance of selection $\mathcal{S}$ and the matrices $U_{S}$, $L_{S}$, and $P_{\mathcal{S}}$ denote the unique p.d. steady-state solution to
\begin{gather*}
U_{S,(t+1)} = f_2( \, U_{S,(t)} \, , \, (1-\epsilon) \, n_s \, \mathbb{E}[Z] \, ),   \\
L_{S,(t+1)} = f_2( \, L_{S,(t)} \, , \, (1+\epsilon) \, n_s \, \mathbb{E}[Z] \, ), \\
P_{\mathcal{S},(t+1)} = f_2( \, P_{\mathcal{S},(t)} \, , \, C_{\mathcal{S}}^{T} R_{\mathcal{S}}^{-1} C_{\mathcal{S}}^{} \, ),
\end{gather*}
resp., such that $U_{S,(0)}, L_{S,(0)}, P_{\mathcal{S},(0)} \in \mathbb{S}_{++}^{m}$. Then,
\begin{align}
\label{eqn:CI_SS}
\mathbb{P} [ \, L_S \preceq P_{\mathcal{S}} \preceq U_S \, ] \geq (1 - \delta).
\end{align}
\end{theorem}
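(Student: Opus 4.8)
The plan is to obtain the steady-state concentration bound~\eqref{eqn:CI_SS} as a limiting consequence of the time-dependent bound~\eqref{eqn:concentration_inequality} in Theorem~\ref{theorem:SS_bounds_t}, exploiting monotonicity and the convergence of the three Riccati-type recursions. First I would observe that on the high-probability event $\mathcal{E} := \{ \, (1-\epsilon) \, n_s \, \mathbb{E}[Z] \preceq C_{\mathcal{S}}^{T} R_{\mathcal{S}}^{-1} C_{\mathcal{S}}^{} \preceq (1+\epsilon) \, n_s \, \mathbb{E}[Z] \, \}$ furnished by Theorem~\ref{theorem:AW} (which occurs with probability at least $1-\delta$), the information matrix of selection~$\mathcal{S}$ is sandwiched between the two deterministic information matrices $(1 \mp \epsilon) \, n_s \, \mathbb{E}[Z]$ \emph{at every time step}. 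The key structural fact I would invoke is the monotonicity of the map $f_2(\Lambda,\Theta)$ in both arguments: it is monotone nondecreasing in $\Lambda$ (standard for the Kalman predict-update composition) and monotone nonincreasing in $\Theta$ (more information yields a smaller error covariance, since $f_2(\Lambda,\Theta) = (\,(A\Lambda A^T + Q)^{-1} + \Theta\,)^{-1}$ and $\Theta \mapsto (M^{-1}+\Theta)^{-1}$ is nonincreasing in the semidefinite order). Combining these, a straightforward induction on $t$ starting from a common initial condition $U_{S,(0)} = L_{S,(0)} = P_{\mathcal{S},(0)}$ shows $L_{S,(t)} \preceq P_{\mathcal{S},(t)} \preceq U_{S,(t)}$ for all $t \geq 0$ on the event $\mathcal{E}$.

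Next I would pass to the limit $t \to \infty$. Under the stated detectability hypotheses — $(A,\mathbb{E}[Z]^{1/2})$ detectable ensures the $U_S$ and $L_S$ recursions converge, and $(A,C_{\mathcal{S}})$ detectable ensures the $P_{\mathcal{S}}$ recursion converges — each of the three recursions is a discrete algebraic Riccati recursion of the standard form whose solution, by the classical convergence theory (e.g., \cite{simon2006optimal}), converges from \emph{any} positive definite initialization to the unique positive definite fixed point $U_S$, $L_S$, $P_{\mathcal{S}}$ respectively. (Stabilizability of $(A, Q^{1/2})$, which follows from Assumption~\ref{assumption:01} since $Q \succ 0$, supplies the complementary controllability-type condition.) Since the semidefinite order is preserved under limits of convergent matrix sequences, taking $t \to \infty$ in $L_{S,(t)} \preceq P_{\mathcal{S},(t)} \preceq U_{S,(t)}$ yields $L_S \preceq P_{\mathcal{S}} \preceq U_S$ on the event $\mathcal{E}$. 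Therefore $\mathbb{P}[\,L_S \preceq P_{\mathcal{S}} \preceq U_S\,] \geq \mathbb{P}(\mathcal{E}) \geq 1 - \delta$, which is~\eqref{eqn:CI_SS}.

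The main obstacle I anticipate is the rigorous justification of the monotonicity-plus-convergence argument, specifically two points: (i) verifying that $f_2$ is indeed order-preserving in $\Lambda$ and order-reversing in $\Theta$ uniformly — the $\Theta$-monotonicity is immediate, but the $\Lambda$-monotonicity requires the well-known but nontrivial lemma that the composition $\Lambda \mapsto f_1(f_4(\Lambda), C, R)$ (equivalently $f_2(\Lambda, C^T R^{-1} C)$) is monotone in the semidefinite cone; and (ii) invoking the right form of the Riccati convergence theorem so that the \emph{same} high-probability event gives both the transient sandwich and a valid limit — this is clean only because the event $\mathcal{E}$ is a single event that does not depend on $t$, so the pointwise-in-$t$ inequality holds simultaneously for all $t$ on $\mathcal{E}$ and can be passed to the limit without any union bound. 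A minor technical detail to handle is confirming that $U_{S,(t)}, L_{S,(t)}, P_{\mathcal{S},(t)}$ remain in $\mathbb{S}_{++}^m$ along the recursion (so the inverses in $f_2$ are well-defined), which follows from Assumption~\ref{assumption:01} and positive definite initialization.
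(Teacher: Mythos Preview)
Your proposal is correct and follows essentially the same route as the paper: establish the semidefinite sandwich $L_{S,(t)} \preceq P_{\mathcal{S},(t)} \preceq U_{S,(t)}$ on the single AW event via monotonicity of the Riccati map (the paper packages this as Corollary~\ref{corollary:KF_01}/\ref{corollary:KF_02}), then invoke convergence of each recursion under detectability (the paper's Lemma~\ref{lemma:KF_03} and Corollaries~\ref{corollary:KF_03}--\ref{corollary:KF_04}) to pass to the steady-state ordering. The only cosmetic difference is that you impose a common initial condition to get the transient sandwich, whereas the paper instead observes that uniqueness of the steady-state fixed points renders any initial ordering unnecessary; both devices lead to the same conclusion.
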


\begin{remark}
\label{remark:detectability_assumptions}
The detectability assumptions of Theorem~\ref{theorem:SS_bounds} hold for any of the following sufficient conditions:
\begin{enumerate}[ label=(\roman*) , leftmargin=6.5mm ]
    \item $(A,\bm{c}_i)$ is detectable for all $i \in \mathcal{I}$, or
    \item Before randomly sampling $n_s$ sensors via \textsc{Algorithm~\ref{alg:homogeneous_SS}}, first choose $n_a$ sensors that guarantee LTI system~\eqref{eqn:system_model_lti} is detectable, i.e., strategically draw a selection $\mathcal{S}_a \in \mathcal{I}^{\, n_a}$ that guarantees the pair $(A,C_{\mathcal{S}_a})$ is detectable.
\end{enumerate}
If condition (ii) holds, then substitute the selection $\mathcal{S}$ in Theorem~\ref{theorem:SS_bounds} for the selection $\mathcal{S} \cup \mathcal{S}_a$, i.e., a union of the randomly sampled $\mathcal{S}$ and deterministically chosen $\mathcal{S}_a$.
\end{remark}


Next, we discuss how to select the quantities stated in the above theorems. Since the above theorems are derived using Theorem~\ref{theorem:AW}, the quantities $\epsilon$ and $\delta$ need to satisfy the conditions of Theorem~\ref{theorem:AW}. To find a feasible set of parameters that satisfies the conditions of Theorem~\ref{theorem:AW}, we can employ \textsc{Algorithm~\ref{alg:02}}. 




\subsection{Heterogeneous Sensor Selection}
\label{subsection:heterogeneous_SS}

Next, we quantify the estimation performance that we achieve in the probabilistic sense for a heterogeneous selection~$\mathcal{H}$ of sensors. We formulate a CI for the steady-state error covariance~$P_{\mathcal{H}}$ in Corollary~\ref{corollary:SS_bounds_K}. Although similar guarantees also exist for the filtered error covariance $P_{\mathcal{H},(t)}$ by applying Theorem~\ref{theorem:SS_bounds_t}, we omit them for brevity.

\begin{corollary}
\label{corollary:SS_bounds_K}
Suppose the following conditions,
\begin{enumerate}[ label=(\roman*) , leftmargin=6.5mm ]
    \item $(A, \mathbb{E}[Z^{(i)}]^{1/2})$ is detectable,
    \item $(A,C_{\mathcal{S}^{(i)}})$ is detectable for any selection $\mathcal{S}^{(i)} \! \in \! \{ \mathcal{I}^{(i)} \hspace{-0.25mm} \}^{{n_s}^{(i)}}$\!\!.
\end{enumerate}
for all $i \in [K]$. Let $P_{\mathcal{H}}$ denote the steady-state error covariance of selection $\mathcal{H}$ and the matrices $U_{H}$, $L_{H}$, and $P_{\mathcal{H}}$ denote the unique p.d. steady-state solution to
\begin{gather}
U_{H,(t+1)} = f_2( \, U_{H,(t)} \, , \, \textstyle\sum\nolimits_{\, i \in [K]} \, (1-\epsilon^{(i)}) \, {n_s}^{(i)} \, \mathbb{E}[Z^{(i)}] \, ),   \nonumber \\
L_{H,(t+1)} = f_2( \, L_{H,(t)} \, , \, \textstyle\sum\nolimits_{\, i \in [K]} \, (1+\epsilon^{(i)}) \, {n_s}^{(i)} \, \mathbb{E}[Z^{(i)}] \, ),   \nonumber \\
\label{eqn:P_H_t_01}
P_{\mathcal{H},(t+1)} = f_2( \, P_{\mathcal{H},(t)} \, , \, C_{\mathcal{H}}^{T} R_{\mathcal{H}}^{-1} C_{\mathcal{H}}^{} \, ),
\end{gather}
resp., such that $U_{H,(0)}, L_{H,(0)}, P_{\mathcal{H},(0)} \in \mathbb{S}_{++}^{m}$. Then,
\begin{align*}
\mathbb{P} [ \, L_H \preceq P_{\mathcal{H}} \preceq U_H \, ] \geq \textstyle\prod\nolimits_{\, i \in [K]} \, ( 1 - \delta^{(i)} ).
\end{align*}
\end{corollary}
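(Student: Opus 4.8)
The plan is to reduce the heterogeneous case to $K$ mutually independent applications of the homogeneous analysis behind Theorem~\ref{theorem:SS_bounds}, and then combine them. First I would record the structural decomposition: since $R_{\mathcal{H}}$ is diagonal (Assumption~\ref{assumption:R}) and $C_{\mathcal{H}}$ stacks the matrices $C_{\mathcal{S}^{(i)}}$, the aggregate information matrix splits additively as $C_{\mathcal{H}}^{T} R_{\mathcal{H}}^{-1} C_{\mathcal{H}}^{} = \sum_{i \in [K]} M^{(i)}$ with $M^{(i)} := C_{\mathcal{S}^{(i)}}^{T} R_{\mathcal{S}^{(i)}}^{-1} C_{\mathcal{S}^{(i)}}^{} = \sum_{j \in \mathcal{S}^{(i)}} Z_j$. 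Because the $K$ partitions are mutually exclusive and each $\mathcal{S}^{(i)}$ is produced by an independent run of \textsc{Algorithm~\ref{alg:homogeneous_SS}}, the random matrices $M^{(1)},\dots,M^{(K)}$ are mutually independent.

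Next I would apply Theorem~\ref{theorem:AW} partition-by-partition. Within partition $i$ the driving sequence consists of ${n_s}^{(i)}$ i.i.d.\ p.s.d.\ matrices with mean $\mathbb{E}[Z^{(i)}]$, so — assuming the parameters $(\epsilon^{(i)},\rho^{(i)},\delta^{(i)})$ satisfy the conditions of Theorem~\ref{theorem:AW} within that partition — the event
\begin{align*}
\mathcal{E}_i := \bigl\{ \, (1-\epsilon^{(i)}) \, {n_s}^{(i)} \, \mathbb{E}[Z^{(i)}] \preceq M^{(i)} \preceq (1+\epsilon^{(i)}) \, {n_s}^{(i)} \, \mathbb{E}[Z^{(i)}] \, \bigr\}
\end{align*}
occurs with probability at least $1-\delta^{(i)}$. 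Since $\mathcal{E}_i$ is measurable with respect to the sensors drawn in partition $i$ alone, the events $\mathcal{E}_1,\dots,\mathcal{E}_K$ are mutually independent, whence $\mathbb{P}[\bigcap_{i\in[K]}\mathcal{E}_i] = \prod_{i\in[K]}\mathbb{P}[\mathcal{E}_i] \geq \prod_{i\in[K]}(1-\delta^{(i)})$. On $\bigcap_{i}\mathcal{E}_i$, adding the per-partition semi-definite inequalities over $i\in[K]$ and using the additive decomposition sandwiches $C_{\mathcal{H}}^{T} R_{\mathcal{H}}^{-1} C_{\mathcal{H}}^{}$ between $\sum_{i\in[K]}(1-\epsilon^{(i)}){n_s}^{(i)}\mathbb{E}[Z^{(i)}]$ and $\sum_{i\in[K]}(1+\epsilon^{(i)}){n_s}^{(i)}\mathbb{E}[Z^{(i)}]$, i.e.\ precisely the second arguments appearing in the $L_H$ and $U_H$ recursions of the statement.

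The last step is the Riccati comparison already carried out for Theorem~\ref{theorem:SS_bounds}. The detectability hypotheses (i)--(ii) — noting that $\sum_i (1\mp\epsilon^{(i)}){n_s}^{(i)}\mathbb{E}[Z^{(i)}]$ dominates each $(1\mp\epsilon^{(i)}){n_s}^{(i)}\mathbb{E}[Z^{(i)}]$ so detectability is inherited, and that stacking the $C_{\mathcal{S}^{(i)}}$ into $C_{\mathcal{H}}$ cannot destroy detectability — ensure that each of the three recursions has a unique p.d.\ fixed point, namely $U_H$, $L_H$, $P_{\mathcal{H}}$, reached from any p.d.\ initialization. Initializing all three at a common p.d.\ matrix and using that $f_2(\cdot,\Theta)$ is monotone nondecreasing in its first argument and monotone nonincreasing in its second, an induction on $t$ propagates the sandwich to $L_{H,(t)} \preceq P_{\mathcal{H},(t)} \preceq U_{H,(t)}$ for all $t\geq 0$ on $\bigcap_i\mathcal{E}_i$; letting $t\to\infty$ transfers it to $L_H \preceq P_{\mathcal{H}} \preceq U_H$. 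Combined with $\mathbb{P}[\bigcap_i\mathcal{E}_i] \geq \prod_i(1-\delta^{(i)})$ this gives the claim.

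I expect the only genuinely delicate point to be the independence bookkeeping: showing rigorously that the $\mathcal{E}_i$ are independent — so that the $(1-\delta^{(i)})$ factors multiply rather than being combined by a crude union bound — and that Theorem~\ref{theorem:AW} applies within each partition with its own $(\epsilon^{(i)},\rho^{(i)},\delta^{(i)})$. Once this is in place, the rest — the additive split of the information matrix, the term-by-term addition of p.s.d.\ inequalities, and the monotonicity/convergence of the Riccati iteration — is a routine reprise of the homogeneous argument behind Theorems~\ref{theorem:SS_bounds_t} and~\ref{theorem:SS_bounds}.
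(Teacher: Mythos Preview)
Your proposal is correct and follows essentially the same route as the paper: apply the Ahlswede--Winter concentration (Theorem~\ref{theorem:AW}) separately to each partition, use the mutual independence of the partitions to multiply the $(1-\delta^{(i)})$ factors, sum the resulting semi-definite sandwiches via the additive decomposition $C_{\mathcal{H}}^{T} R_{\mathcal{H}}^{-1} C_{\mathcal{H}}^{} = \sum_{i} C_{\mathcal{S}^{(i)}}^{T} R_{\mathcal{S}^{(i)}}^{-1} C_{\mathcal{S}^{(i)}}^{}$, and then invoke the same Riccati comparison argument that underlies Theorem~\ref{theorem:SS_bounds}. Your write-up is in fact more explicit than the paper's on the independence justification and on why the detectability hypotheses (i)--(ii) suffice for the aggregate recursions, points the paper absorbs into the phrase ``following a derivation similar to that of Theorem~\ref{theorem:SS_bounds}.''
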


\begin{proof}
First, observe that the guarantees of Theorem~\ref{theorem:SS_bounds} also apply to each partition in a heterogeneous setup. For example, if the requirements of Theorem~\ref{theorem:SS_bounds} are met, then the following CI holds for the $i$-th partition,
\begin{equation}
\begin{aligned}
\label{eqn:sum_Z_inequality_i}
& \ \ \ \ \ \ \mathbb{P} \big{[} \, (1-\epsilon^{(i)}) \, {n_s}^{(i)} \, \mathbb{E}[Z^{(i)}] \preceq C_{\mathcal{S}^{(i)}}^{T} R_{\mathcal{S}^{(i)}}^{-1} C_{\mathcal{S}^{(i)}}^{}   \\
&= \textstyle\sum_{j=1}^{{n_s}^{(i)}} \! \! {Z_{j}}^{(i)} \! \preceq (1+\epsilon^{(i)}) \, {n_s}^{(i)} \, \mathbb{E}[Z^{(i)}] \, \big{]} \geq ( 1 - \delta^{(i)} ).
\end{aligned}
\end{equation}
Since each partition is mutually exclusive, the CIs of each partition can be aggregated. Observe that $C_{\mathcal{H}}^{T} R_{\mathcal{H}}^{-1} C_{\mathcal{H}^{}} = \sum_{i=1}^{K} C_{\mathcal{S}^{(i)}}^{T} R_{\mathcal{S}^{(i)}}^{-1} C_{\mathcal{S}^{(i)}}^{}$. Thus, we obtain
\begin{equation}
\begin{aligned}
\label{eqn:sum_Z_inequality_sum_i}
& \, \mathbb{P} \big{[} \, \textstyle\sum_{i=1}^{K} (1-\epsilon^{(i)}) \, {n_s}^{(i)} \, \mathbb{E}[Z^{(i)}] \preceq \textstyle\sum_{i=1}^{K} \textstyle\sum_{j=1}^{{n_s}^{(i)}} {Z_{j}}^{(i)}   \\
&= C_{\mathcal{H}}^{T} R_{\mathcal{H}}^{-1} C_{\mathcal{H}^{}} \preceq \textstyle\sum_{i=1}^{K} (1+\epsilon^{(i)}) \, {n_s}^{(i)} \, \mathbb{E}[Z^{(i)}] \, \big{]} \geq \Psi,
\end{aligned}
\end{equation}
where $\Psi := \prod_{i=1}^{K} \, ( 1 - \delta^{(i)} )$. Finally, we derive the assumptions and quantities of Corollary~\ref{corollary:SS_bounds_K} by following a derivation similar to that of Theorem~\ref{theorem:SS_bounds}.
\end{proof}

We comment on certain aspect of Corollary~\ref{corollary:SS_bounds_K}.

\begin{remark}
The conditions of Remark~\ref{remark:detectability_assumptions} are also applicable to each partition under consideration by Corollary~\ref{corollary:SS_bounds_K}.
\end{remark}

\begin{remark}
In the unpartitioned case, i.e., when the number of partitions $K$ is equal to unity, the guarantees of Corollary~\ref{corollary:SS_bounds_K} reduce to those of Theorem~\ref{theorem:SS_bounds}, a special case.
\end{remark}

\begin{remark}
\label{remark:compare}
The guarantees on estimation performance by Corollary~\ref{corollary:SS_bounds_K} and Theorem~\ref{theorem:SS_bounds} can be properly compared for any number of partitions $K$ if the following conditions hold,
\begin{align*}
n_s = \textstyle\sum\limits_{i \in [K]} {n_s}^{(i)}, \, n_c = \textstyle\sum\limits_{i \in [K]} {n_c}^{(i)}, \, (1-\delta) = \textstyle\prod\limits_{i \in [K]} (1-\delta^{(i)}).
\end{align*}
The first (second) condition ensures that the homogeneous and heterogeneous guarantees are considering the same number of sampled (candidate) sensors. The third condition ensures that the probabilities associated with the events of Corollary~\ref{corollary:SS_bounds_K} and Theorem~\ref{theorem:SS_bounds} are identical.
\end{remark}

%
%
%
%
%
%


Next, we discuss how to select the quantities stated in the above corollary. To find a feasible set of parameters for each of the $K$ partitions under consideration by Corollary~\ref{corollary:SS_bounds_K}, we can employ \textsc{Algorithm~\ref{alg:02}} to each partition.


\subsection{Constrained Sensor Selection}
\label{subsection:constrained_SS}


Next, we study how the CIs of Section~\ref{subsection:homogeneous_SS} are affected when subjected to sampling constraints. Though the CIs in this section are formulated for the steady-state error covariance, they can also be extended to the filtered error covariance. 

We first introduce the function $\Phi$ to ensure that any quantity $\mathbb{P}[ \cdot ]$ is bounded by scalars that lie on the interval $[0,1]$. We repeatedly employ $\Phi$ and $f_5$ of \eqref{eqn:C} for compact notation.


\begin{definition}
\label{def:phi}
Assume $\gamma \in (-\infty,1]$. Define the function
\begin{equation*}
\Phi ( \gamma ) := 
    \begin{cases}
        \, \gamma & \text{if } \, \gamma \in [0,1] \\
        \, 0 & \text{if } \, \gamma < 0
    \end{cases}.
\end{equation*}
\end{definition}


Next, we generalize the guarantees of Theorem~\ref{theorem:SS_bounds} to consider any homogeneous selection $\mathcal{S}$ irrespective of whether or not the sampling constraints under consideration are satisfied.

\begin{theorem}
\label{theorem:intersection_bounds}
Define $\alpha := 1 - [ \, n_c - \textstyle\sum_{j=1}^{n_c} \textstyle\sum_{i=0}^{k_{j}} f_5 ( n_s, i, p_j ) \, ]$, a scalar that does not exceed unity, i.e., $\alpha \leq 1$. Then, the events, $\{ L_{S} \preceq P_{\mathcal{S}} \preceq U_{S} \}$ and $\mathcal{K}$, hold simultaneously for a selection $\mathcal{S}$ at least with probability $\Phi ( \alpha - \delta )$, i.e.,
\begin{align}
\label{eqn:CI-C}
\mathbb{P}[ \, \{ L_{S} \preceq P_{\mathcal{S}} \preceq U_{S} \} \, \cap \, \mathcal{K} \, ] \geq \Phi ( \alpha - \delta ).
\end{align}
\end{theorem}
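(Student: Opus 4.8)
The plan is to combine the concentration inequality of Theorem~\ref{theorem:SS_bounds} with a union-bound estimate of the probability that the sampling constraints are violated, and then intersect the two events. Write $B := \{ L_{S} \preceq P_{\mathcal{S}} \preceq U_{S} \}$ for the ``good covariance'' event. By Theorem~\ref{theorem:SS_bounds}, $\mathbb{P}[B] \geq 1 - \delta$. Separately, by Definition~\ref{def:I} and equation~\eqref{eqn:C}, for each candidate sensor $j \in [n_c]$ we have $\mathbb{P}[ I_j \leq k_j ] = \sum_{i=0}^{k_j} f_5(n_s,i,p_j)$, so $\mathbb{P}[ I_j \nleq k_j ] = 1 - \sum_{i=0}^{k_j} f_5(n_s,i,p_j)$. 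A union bound over $\mathcal{K}^{c} = \bigcup_{j \in [n_c]} \{ I_j \nleq k_j \}$ then gives $\mathbb{P}[\mathcal{K}^{c}] \leq \sum_{j=1}^{n_c}\big(1 - \sum_{i=0}^{k_j} f_5(n_s,i,p_j)\big) = n_c - \sum_{j=1}^{n_c}\sum_{i=0}^{k_j} f_5(n_s,i,p_j) = 1 - \alpha$, hence $\mathbb{P}[\mathcal{K}] \geq \alpha$. (That $\alpha \le 1$ is immediate since $\mathbb{P}[\mathcal{K}] \le 1$, or directly since each inner sum is a probability $\le 1$.)

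Next I would apply the elementary two-event bound $\mathbb{P}[B \cap \mathcal{K}] \geq \mathbb{P}[B] + \mathbb{P}[\mathcal{K}] - 1 \geq (1-\delta) + \alpha - 1 = \alpha - \delta$, which is just Bonferroni / the inclusion--exclusion inequality $\mathbb{P}[B \cap \mathcal{K}] = \mathbb{P}[B] + \mathbb{P}[\mathcal{K}] - \mathbb{P}[B \cup \mathcal{K}] \geq \mathbb{P}[B] + \mathbb{P}[\mathcal{K}] - 1$. Finally, since any probability lies in $[0,1]$ and $\alpha - \delta \le 1$ (because $\alpha \le 1$ and $\delta > 0$), the sharper statement $\mathbb{P}[B \cap \mathcal{K}] \geq \Phi(\alpha - \delta)$ follows from Definition~\ref{def:phi}: if $\alpha - \delta < 0$ the bound is the trivial $\mathbb{P}[B\cap\mathcal{K}] \ge 0$, and if $\alpha - \delta \in [0,1]$ then $\Phi(\alpha-\delta) = \alpha - \delta$, recovering the Bonferroni bound. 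This gives exactly~\eqref{eqn:CI-C}.

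The only subtlety worth stating carefully — and the step I expect to be the main obstacle — is the independence structure underlying $\mathbb{P}[B] \ge 1-\delta$ versus $\mathbb{P}[\mathcal{K}] \ge \alpha$: both events are defined on the \emph{same} probability space generated by the single homogeneous draw of $\mathcal{S}$ via \textsc{Algorithm~\ref{alg:homogeneous_SS}} (the counts $I_j$ and the matrix $C_{\mathcal{S}}^{T} R_{\mathcal{S}}^{-1} C_{\mathcal{S}}^{}$ are both deterministic functions of that draw), so no independence is assumed or needed — the Bonferroni inequality applies to arbitrarily dependent events, which is precisely why it is the right tool here. A secondary point is ensuring the binomial coefficients in $f_5(n_s,i,p_j)$ are well defined for all $i \le k_j$, which is guaranteed by $n_s \ge k_m \ge k_j$ from Assumption~\ref{assumption:k}; this assumption also makes $k$ well-defined via~\eqref{eqn:k} so that the double sum in $\alpha$ makes sense. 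Everything else is a one-line application of standard inequalities.
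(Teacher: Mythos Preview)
Your proposal is correct and follows essentially the same route as the paper: a union bound over $\mathcal{K}^{c}$ to obtain $\mathbb{P}[\mathcal{K}]\ge\alpha$ (the paper packages this as Lemma~\ref{lemma:event_bound}), followed by the two-event Bonferroni bound (the paper writes out the equivalent identity $\mathbb{P}[\mathcal{A}]=\mathbb{P}[\mathcal{A}\cap\mathcal{K}]+1-\mathbb{P}[\mathcal{K}]-\mathbb{P}[\mathcal{A}^{c}\cap\mathcal{K}^{c}]$ and drops the last nonnegative term), then an application of $\Phi$. One small slip: your first justification that ``$\alpha\le1$ since $\mathbb{P}[\mathcal{K}]\le1$'' does not follow from $\mathbb{P}[\mathcal{K}]\ge\alpha$; only your second argument (each inner sum is a probability, hence $\le1$) is valid --- keep that one.
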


\begin{proof}
First, we define $\mathcal{A} := \{ L_{S} \preceq P_{\mathcal{S}} \preceq U_{S} \}$ and derive a lower bound on $\mathbb{P}[ \mathcal{K} ]$ in Lemma~\ref{lemma:event_bound}. We refer the reader to the Appendix for the proof of Lemma~\ref{lemma:event_bound}. 

\vspace{-1mm}

\begin{lemma}
\label{lemma:event_bound}
Event $\mathcal{K}$ satisfies the inequality $\mathbb{P}[ \mathcal{K} ] \geq \Phi ( \alpha )$.
\end{lemma}

\vspace{-1mm}

Next, if we assume $\mathbb{P}[ \mathcal{A} ] \geq (1-\delta)$, then $\mathbb{P}[ \mathcal{A} \cap \mathcal{K} ] \geq ( \mathbb{P}[ \mathcal{K} ] - \delta )$ since the quantity $\mathbb{P}[ \mathcal{A} ]$ is equal to $\mathbb{P}[ \mathcal{A} \cap \mathcal{K} ] + 1 - \mathbb{P}[ \mathcal{K} ] - \mathbb{P}[ \mathcal{A}^{c} \cap \mathcal{K}^{c} ]$. We finally obtain \eqref{eqn:CI-C} by applying Lemma~\ref{lemma:event_bound} to the inequality $\mathbb{P}[ \mathcal{A} \cap \mathcal{K} ] \geq ( \mathbb{P}[ \mathcal{K} ] - \delta )$.
\end{proof}

We comment on the significance of $\alpha$.

\begin{remark}
\label{remark:special_case}
Though no sampling constraints are explicitly defined for a homogeneous selection, it is implicitly assumed by \textsc{Algorithm~\ref{alg:homogeneous_SS}} that $k_i = n_s$ for all $i \in [n_c]$. If $k_i = n_s$ for all $i \in [n_c]$, then $\alpha$ is equal to unity, i.e., $\alpha = 1$, and the theorems in this section reduce to Theorem~\ref{theorem:SS_bounds}, a special case.
\end{remark}


\begin{remark}
\label{remark:uniform}
The evaluation of $\alpha$ is computationally expensive for small $n_s$ and large $n_c$. A remedy to decrease the cost of evaluating $\alpha$ is to assume a uniform set of sampling constraints, i.e., $k_i = k_u \in \{ \mathbb{N} , 0 \}$ for all $i \in [n_c]$, where the constant $k_u$, referred to as the uniformity factor, is sufficient information for specifying the sampling constraints of $k$.
\end{remark}


Next, we generalize the guarantees of Theorem~\ref{theorem:SS_bounds} to consider a homogeneous selection that also satisfies the sampling constraints under consideration, i.e., a constrained selection.

\begin{theorem}
\label{theorem:conditional_bounds}
The expected number $N$ of selections drawn (by \textsc{Algorithm~\ref{alg:constrained_SS}}), before sampling the first constrained selection $\mathcal{C}$, is at most $\alpha^{-1}$, i.e., $\mathbb{E}[ N ] \leq \alpha^{-1}$, and
\begin{align}
\label{eqn:CI-C2}
\mathbb{P}[ \, L_{C} \preceq P_{\mathcal{C}} \preceq U_{C} \, ] \geq \Phi ( 1 - \delta / \alpha ).
\end{align}
\end{theorem}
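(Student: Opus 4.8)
The plan is to recognize Algorithm~\ref{alg:constrained_SS} as a rejection sampler: it draws independent homogeneous selections via Algorithm~\ref{alg:homogeneous_SS} and returns the first one that realizes event $\mathcal{K}$. Two facts then follow immediately. First, the number $N$ of draws is geometric, so $\mathbb{E}[N] = 1/\mathbb{P}[\mathcal{K}]$. Second, the returned selection $\mathcal{C}$ is distributed exactly as a homogeneous selection $\mathcal{S}$ conditioned on $\mathcal{K}$, so any probability involving $\mathcal{C}$ is a conditional probability given $\mathcal{K}$. With these two observations in hand, everything reduces to Lemma~\ref{lemma:event_bound} and the intermediate estimate already used inside the proof of Theorem~\ref{theorem:intersection_bounds}.

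For the bound on $\mathbb{E}[N]$ I would argue that each iteration of the while-loop succeeds (produces a selection satisfying $\mathcal{K}$) independently with probability $q := \mathbb{P}[\mathcal{K}]$, so $N$ is geometric with parameter $q$ and $\mathbb{E}[N] = q^{-1}$. Lemma~\ref{lemma:event_bound} gives $q \geq \Phi(\alpha)$; since $\alpha \leq 1$ we have $\Phi(\alpha) = \max\{\alpha,0\}$, and the claim $\mathbb{E}[N] \leq \alpha^{-1}$ is meaningful precisely when $\alpha > 0$ (which is also what guarantees the loop halts almost surely). In that regime $q \geq \alpha$ yields $\mathbb{E}[N] = q^{-1} \leq \alpha^{-1}$.

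For the concentration bound, write $\mathcal{A} := \{ L_S \preceq P_{\mathcal{S}} \preceq U_S \}$ as in the proof of Theorem~\ref{theorem:intersection_bounds}, and note that $L_C = L_S$ and $U_C = U_S$ are the same deterministic matrices while $P_{\mathcal{C}}$ is $P_{\mathcal{S}}$ evaluated at the accepted selection; hence the event $\{ L_C \preceq P_{\mathcal{C}} \preceq U_C \}$ has probability $\mathbb{P}[\mathcal{A} \mid \mathcal{K}] = \mathbb{P}[\mathcal{A} \cap \mathcal{K}] / \mathbb{P}[\mathcal{K}]$. I would then invoke the intermediate inequality from the proof of Theorem~\ref{theorem:intersection_bounds}, namely $\mathbb{P}[\mathcal{A} \cap \mathcal{K}] \geq \mathbb{P}[\mathcal{K}] - \delta$ (obtained from $\mathbb{P}[\mathcal{A}] \geq 1-\delta$ together with $\mathbb{P}[\mathcal{A}] = \mathbb{P}[\mathcal{A} \cap \mathcal{K}] + 1 - \mathbb{P}[\mathcal{K}] - \mathbb{P}[\mathcal{A}^c \cap \mathcal{K}^c]$ and $\mathbb{P}[\mathcal{A}^c \cap \mathcal{K}^c] \geq 0$). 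Dividing by $\mathbb{P}[\mathcal{K}]$ and using $\mathbb{P}[\mathcal{K}] \geq \alpha > 0$ gives
\begin{align*}
\mathbb{P}[ L_C \preceq P_{\mathcal{C}} \preceq U_C ] \geq 1 - \frac{\delta}{\mathbb{P}[\mathcal{K}]} \geq 1 - \frac{\delta}{\alpha}.
\end{align*}
Finally, since the left-hand side automatically lies in $[0,1]$ and $1 - \delta/\alpha \leq 1$, this upgrades to $\mathbb{P}[ L_C \preceq P_{\mathcal{C}} \preceq U_C ] \geq \Phi(1 - \delta/\alpha)$, which is \eqref{eqn:CI-C2}.

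The main obstacle I expect is not any computation but the distributional bookkeeping: one must justify that conditioning on $\mathcal{K}$ is legitimate, i.e. $\mathbb{P}[\mathcal{K}] > 0$ so the algorithm terminates almost surely (guaranteed once $\alpha > 0$, which is implicitly required for $\alpha^{-1}$ to be meaningful), and that the loop structure of Algorithm~\ref{alg:constrained_SS} introduces no bias beyond plain conditioning on $\mathcal{K}$. Once this is pinned down, the two parts follow from the geometric-mean formula and the already-proven intermediate estimate of Theorem~\ref{theorem:intersection_bounds}, respectively.
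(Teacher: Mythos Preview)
Your proposal is correct and follows essentially the same route as the paper: both recognize that $N$ is geometric with success probability $\mathbb{P}[\mathcal{K}]$ (the paper computes the geometric mean explicitly via a power-series derivative, whereas you invoke it directly), and both derive \eqref{eqn:CI-C2} by dividing the intermediate inequality $\mathbb{P}[\mathcal{A}\cap\mathcal{K}]\geq\mathbb{P}[\mathcal{K}]-\delta$ from the proof of Theorem~\ref{theorem:intersection_bounds} by $\mathbb{P}[\mathcal{K}]$ and then applying Lemma~\ref{lemma:event_bound}. Your added remarks on the implicit requirement $\alpha>0$ and on the rejection-sampling interpretation are sound and make the distributional bookkeeping more explicit than the paper does.
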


\begin{proof}
First, we derive the inequality $\mathbb{E}[ N ] \leq \alpha^{-1}$. We define the following scalars, $\gamma := \mathbb{P}[ \mathcal{K} ]$ and $\zeta := (1-\gamma)$, and the infinite sum $S := \sum_{i=1}^{\infty} \zeta^i = \zeta \hspace{0.5mm} (1-\zeta)^{-1}$. Observe that
\begin{align}
\label{eqn:EN}
\mathbb{E}[ N ] = \textstyle\sum_{n=1}^{\infty} n \hspace{0.5mm} \mathbb{P}[ N=n ] = \textstyle\sum_{n=1}^{\infty} n \hspace{0.5mm} \gamma \hspace{0.5mm} (1-\gamma)^{n-1}.
\end{align}
Also, observe that the infinite sum $S$ is a convergent power series since $\gamma \in [0,1]$. Its derivative is equal to the following,
\begin{align}
\label{eqn:dS}
dS \hspace{0.25mm} / \hspace{0.25mm} d\zeta = \textstyle\sum_{i=1}^{\infty} i \hspace{0.5mm} \zeta^{i-1} = (1-\zeta)^{-2} = \gamma^{-2}.
\end{align}
Observe that \eqref{eqn:EN} and \eqref{eqn:dS} imply $\mathbb{E}[ N ] = \gamma^{-1}$. We finally obtain the inequality $\mathbb{E}[ N ] \leq \alpha^{-1}$ by applying the inequality $\mathbb{P}[ \mathcal{K} ] =: \gamma \geq \alpha$ of Lemma~\ref{lemma:event_bound} to the equality $\mathbb{E}[ N ] = \gamma^{-1}$.

Next, we derive \eqref{theorem:conditional_bounds}. We begin at the point in the proof of Theorem~\ref{theorem:intersection_bounds} before applying Lemma~\ref{lemma:event_bound}, i.e., $\mathbb{P}[ \mathcal{A} \cap \mathcal{K} ] \geq ( \mathbb{P}[ \mathcal{K} ] - \delta )$. We obtain \eqref{theorem:conditional_bounds} by executing the following steps,
\begin{align*}
\mathbb{P}[ \mathcal{A} \cap \mathcal{K} ] \geq ( \mathbb{P}[ \mathcal{K} ] - \delta ) \, \overset{(a)}{\Leftrightarrow} \, \mathbb{P}[ \mathcal{A} | \mathcal{K} ] \geq ( 1 - \delta / \mathbb{P}[ \mathcal{K} ] ) \, \overset{(b)}{\Rightarrow} \eqref{eqn:CI-C2}.
\end{align*}
where step~(a) applies Bayes' theorem and step~(b) employs Lemma~\ref{lemma:event_bound}. Observe that $\mathcal{A} | \mathcal{K} := \{ L_{C} \preceq P_{\mathcal{C}} \preceq U_{C} \}$.
\end{proof}


The distinction between Theorem~\ref{theorem:intersection_bounds} and \ref{theorem:conditional_bounds} is that the former is a guarantee for any drawn homogeneous selection $\mathcal{S}$ and the latter is one for any drawn constrained selection $\mathcal{C}$. Recall that the $N$-th selection drawn by \textsc{Algorithm~\ref{alg:constrained_SS}} is the first of $N$ draws to satisfy the sampling constraints specified by $k$.

\begin{remark}
\label{remark:verification}
Besides sampling, \textsc{Algorithm~\ref{alg:constrained_SS}} also verifies that each of the $N$ drawn selections meets its sampling constraints. Note that the verification step of \textsc{Algorithm~\ref{alg:constrained_SS}} is computationally inexpensive according to Remark~\ref{remark:IV}.
\end{remark}

\begin{remark}
\label{remark:inequality}
The probability that the event~$\{ \, L_{C} \preceq P_{\mathcal{C}} \preceq U_{C} \, \}$ occurs is at least equal to or greater than that of the event $\{ L_{S} \preceq P_{\mathcal{S}} \preceq U_{S} \} \cap \mathcal{K}$ since $\Phi ( 1 - \delta / \alpha ) \geq \Phi ( \alpha - \delta )$.
\end{remark}

If $\alpha^{-1}$ is not very large, then Remark~\ref{remark:verification} and Remark~\ref{remark:inequality} suggest that the CI of Theorem~\ref{theorem:conditional_bounds} is preferred over that of Theorem~\ref{theorem:intersection_bounds} since $\mathbb{E}[ N ] \leq \alpha^{-1}$.



We now highlight a connection between sampling \textit{with} and \textit{without} replacement. First, we should clarify that the selection drawn via a sampling \textit{without} replacement policy from set~$\mathcal{I}$ is a subset selection, i.e., a selection where no candidate sensor is chosen more than once. Next, if a constrained selection~$\mathcal{C}$ is drawn, where $n_s \leq n_c$ and $k_i = 1$ for all $i \in \mathcal{I}$, and at least $n_s$ elements of $p$ are non-zero, then $\mathcal{C}$ is also a subset selection. For this special case, our sampling \textit{with} replacement policy and a sampling \textit{without} replacement policy are comparable in the sense that they draw a subset selection.


\subsection{Proposed Sampling Distribution}
\label{subsection:proposed}

In this section, we study how to find a sampling distribution that optimizes estimation performance. First, we outline the optimization problems that we want to solve for the purpose of optimizing the estimation performance of the Kalman filter. We choose the worst-case estimation performance, i.e., the maximum eigenvalue of the error covariance, to gauge the quality of state estimation for our Kalman filter.

Since the quantities $\overline{\lambda}(P_{\mathcal{S},(t)})$ and $\overline{\lambda}(P_{\mathcal{S}})$ of interest cannot be directly minimized due to the stochastic nature of the estimation error covariance, we indirectly minimize them by minimizing their upper bounds $\overline{\lambda}(U_{S,(t)})$ and $\overline{\lambda}(U_{S})$, resp. Note that minimizing the upper bounds $\overline{\lambda}(U_{S,(t)})$ and $\overline{\lambda}(U_{S})$ is equivalent to solving the following optimization problems,
\begin{equation}
\label{eqn:opt_03}
\begin{aligned}
\min_{ U_{S,(t)}, \gamma } \hspace{1.0mm} & \overline{\lambda} \hspace{0.25mm} ( \, U_{S,(t)} \, )   \\
\mathrm{s.t.} \hspace{3.0mm}
& \eqref{def:T}, \hspace{0.25mm} U_{S,(t)} \hspace{-1.0mm} \in \hspace{-0.25mm} \mathbb{S}_{++}^{m}, \hspace{0.25mm} \eqref{eqn:epsilon}, \hspace{0.25mm} \eqref{eqn:Z_inequality}, \hspace{0.25mm} U_{S,(t)} \hspace{-0.5mm} = \hspace{-0.5mm} f_2( \Sigma_{(t)} , \Pi \hspace{0.25mm} ),
\end{aligned}
\end{equation}
\begin{equation}
\label{eqn:opt_01}
\begin{aligned}
\min_{ U_S, \gamma } \hspace{2.5mm} & \overline{\lambda} \hspace{0.25mm} ( \, U_S \, )   \\
\mathrm{s.t.} \hspace{2.5mm}
& \eqref{def:T}, \hspace{0.25mm} U_S \in \mathbb{S}_{++}^{m}, \hspace{0.25mm} \eqref{eqn:epsilon}, \hspace{0.25mm} \eqref{eqn:Z_inequality}, \hspace{0.25mm} U_S = f_2( \hspace{0.25mm} U_S , \Pi \hspace{0.25mm} ),
\end{aligned}
\end{equation}
resp., where $\gamma := \{ \rho, \epsilon, p \}$ and $\Pi := (1-\epsilon) \, n_s \, \mathbb{E}[Z]$. 

When the quantities $\epsilon$ and $\rho$ are fixed, the following are convex programs to solving the aforementioned problems,
\begin{equation}
\label{eqn:opt_04}
\begin{aligned}
\max_{ \lambda , p , X_{(t)} } \hspace{0.25mm} & \lambda   \\
\mathrm{s.t.} \hspace{2.5mm}
& \lambda > 0, \, p \in \Delta^{n_c}, \, X_{(t)} \in \mathbb{S}_{++}^{m}, \, X_{(t)} \succeq \lambda I_m, \, \eqref{eqn:Z_inequality},   \\
&
X_{(t)} = (A \, \Sigma_{(t)} A^{T} + Q)^{-1} + (1-\epsilon) \, n_s \, \mathbb{E}[Z],
\end{aligned}
\end{equation}
\begin{equation}
\label{eqn:opt_02}
\begin{aligned}
\max_{ \lambda , p , X } \quad &   \lambda   \\
\mathrm{s.t.} \hspace{4.5mm}
& \lambda > 0, \, p \in \Delta^{n_c}, \, X \in \mathbb{S}_{++}^{m}, \, X \succeq \lambda I_m, \, \eqref{eqn:Z_inequality}, \\
&
\begin{bmatrix}
( - X + Q^{-1} + \Pi ) & ( Q^{-1} A )   \\
( Q^{-1} A )^{T} & ( X + A^{T} Q^{-1} A )
\end{bmatrix}
\succeq 0,
\end{aligned}
\end{equation}
resp., where \eqref{eqn:opt_04} is a convex formulation of \eqref{eqn:opt_03} and \eqref{eqn:opt_02} is a convex relaxation of \eqref{eqn:opt_01}. Note that the pair $(A, \mathbb{E}[Z]^{1/2})$ must be detectable for the programs \eqref{eqn:opt_01} and \eqref{eqn:opt_02} to output an upper bound $U_S$ on the steady-state error covariance $P_S$.

Next, we comment on the convex programs. First, we refer the reader to \textsc{Algorithm}~\ref{alg:03} on how to select~$\epsilon$ and compute~$\rho$. Second, the solution $p^{*}$ to \eqref{eqn:opt_04} is the sampling distribution that minimizes $\overline{\lambda} \hspace{0.25mm} ( \hspace{0.25mm} U_{S,(t)} \hspace{0.25mm} )$. Third, the solution $p^{*}$ to \eqref{eqn:opt_02} is the optimal sampling distribution of a convex relaxation of \eqref{eqn:opt_03}. Thus, the convex program \eqref{eqn:opt_02} is only a heuristic for minimizing the steady-state quantity $\overline{\lambda}( \hspace{0.25mm} U_{S} \hspace{0.25mm} )$. Fourth, since \eqref{eqn:opt_04} and \eqref{eqn:opt_02} are convex, they can be solved in polynomial-time using interior-point methods \cite{vandenberghe2005interior}. The complexity with respect to the number of floating point operations (flops) to execute a general interior-point algorithm is $O( m^3 n_c^3 )$. The complexity is dominated by \eqref{eqn:Z_inequality}, a constraint in \eqref{eqn:opt_04} and \eqref{eqn:opt_02} that consists of $n_c$ semi-definite inequalities.







Next, we show that \eqref{eqn:opt_03} is equivalent to \eqref{eqn:opt_04}. First, we define $X_{(t)} \hspace{-0.25mm} := \hspace{-0.25mm} U_{S,(t)}^{-1}$ for clarity in notation. Next, we reformulate \eqref{eqn:opt_03} as the maximization of $\underline{\lambda}( X_{(t)} )$. As a consequence of the reformulation, the constraint $X_{(t)} \succeq \lambda I_m$ is added to \eqref{eqn:opt_02}. Next, we obtain the final constraint of \eqref{eqn:opt_04} by stating the final constraint of \eqref{eqn:opt_03} in terms of $X_{(t)}$. Next, observe that the constraint $X_{(t)} \in \mathbb{S}_{++}^{m}$ always holds since $\Sigma_{(t)} \in \mathbb{S}_{++}^{m}$ and $(1-\epsilon) \, n_s \, \mathbb{E}[Z] \in \mathbb{S}_{+}^{m}$. Also, observe that the following constraints, $X_{(t)} \in \mathbb{S}_{++}^{m}$ and $X_{(t)} \succeq \lambda I_m$, imply $\lambda > 0$ since the objective is to maximize $\lambda$. Thus, the following constraints, $X_{(t)} \in \mathbb{S}_{++}^{m}$ and $\lambda > 0$, are redundant and only stated in \eqref{eqn:opt_04} for clarity. By following a similar derivation and relaxing the equality constraint $U_S = f_2( \hspace{0.25mm} U_S , \Pi \hspace{0.25mm} )$, we can show that \eqref{eqn:opt_02} is a convex relaxation of \eqref{eqn:opt_01}. The proof also consists of applying the definition $X:=U_{S}^{-1}$, the matrix inversion lemma, and the Schur complement method.

We employ \textsc{Algorithm~\ref{alg:04}}, a polynomial-time procedure that executes a grid search over $\epsilon$, to find a sampling distribution $p^{*}$, i.e., a sampling \textit{with replacement} policy, that minimizes the time-dependent quantity~$\overline{ \lambda}( \hspace{0.25mm} U_{S,(t)} )$.

\begin{remark}
Though \textsc{Algorithm~\ref{alg:03}} is a general procedure for finding a feasible sampling distribution, it does not explicitly outline how to find one. \textsc{Algorithm~\ref{alg:04}} is established to simultaneously execute \textsc{Algorithm~\ref{alg:03}} and find a sampling distribution that is optimal with respect to state estimation.
\end{remark}

\begin{remark}
\label{remark:algo}
In step~5 of \textsc{Algorithm~\ref{alg:04}}, if \eqref{eqn:opt_02} is executed instead of \eqref{eqn:opt_04}, then the algorithm outputs a proposed sampling distribution $p^{*}$ for the steady-state quantity $\overline{\lambda}(U_{S})$.
\end{remark}


\vspace{-2.5mm}

\begin{algorithm}
\DontPrintSemicolon
  \KwInput{$n_s$, $n_p$, $\delta$, $\{ \mathcal{Z}_{i} \}_{i \in \mathcal{I}}$}
  
  Compute $c_0$, $\varrho^{*}$

  Generate a list $\Phi$ of $n_p$ equally-spaced points on the interval $[ \sqrt{\varrho^{*} c_0} , 1 )$, i.e., a grid space of $n_p$ points
  
  \For{$i \in [n_p]$}
  {
  Compute $\hat{\rho}$ : $\hat{\epsilon} := \Phi[i]$, $\hat{\rho} = \hat{\epsilon}^2 c_0^{-1}$
  
  Execute the program \eqref{eqn:opt_04} for the pair $( \hat{\epsilon} , \hat{\rho} )$
    
  Record the tuple $( \hat{\epsilon}, \hat{\rho}, p^{*}, \lambda^{*} )_{i}$ of the $i$-th grid point
  
  }

  Select the tuple corresponding to the minimum ${\lambda^{*}}^{-1}$
  
  Redefine $\hat{\epsilon}$, $\hat{\rho}$, and $p^{*}$ as the first three quantities, resp., corresponding to the selected tuple

  \KwOutput{$\hat{\epsilon}$, $\hat{\rho}$, $p^{*}$}

\caption{Proposed Sampling Distribution}
\label{alg:04}
\end{algorithm}

\vspace{-2.5mm}



Next, we formulate a search procedure for finding a set $\{ \hspace{0.25mm} {p^{*}}^{(i)} \hspace{0.25mm} \}_{i \in [K]}$ of proposed sampling distributions in the heterogeneous setup. The procedure consists of applying \textsc{Algorithm~\ref{alg:04}} to each of the $K$ partitions, i.e., finding a sampling distribution that optimizes estimation performance for each partition, and returning $\{ \hat{\epsilon}^{(i)}, \hat{\rho}^{(i)}, {p^{*}}^{(i)} \}$ for the $i$-th partition. We refer to this procedure as the heterogeneous version of \textsc{Algorithm}~\ref{alg:04}. Though this decentralized approach does not optimally improve the estimation performance of the Kalman filter, it does reduce the computational cost of executing \textsc{Algorithm}~\ref{alg:04} for the homogeneous setup since it converts one large optimization problem into $K$ smaller ones.



\section{Numerics}
\label{section:simulation_results}

In Section~\ref{section:simulation_01}, we compare the estimation performance of the sampling policy proposed in Section~\ref{subsection:proposed} against three other sampling policies: uniform, deterministic greedy, and randomized greedy. In Section~\ref{section:simulation_02}, we study a decentralized version of \textsc{Algorithm~\ref{alg:04}}. In Section~\ref{section:simulation_03}, we study how $\delta$ and our proposed sampling distribution affect $\alpha$. We focus our analysis on the steady-state estimation performance.

Throughout this section, we assume $m = 3$, $\delta = 0.05$, and $Q = 0.50 \, I_m$. Each element of the state matrix $A$  is independently chosen at random from a uniform distribution. Each element of the sequence $\bm{c}_i$ is similarly chosen for each candidate sensor. We assume the measurement noise variance of each candidate sensor is identical, i.e., $\bm{\sigma}_{i}^{2} = 0.50$ for all $i \in \mathcal{I}$. We also verify that each pair $(A,\bm{c}_{i})$ is detectable in order to satisfy the detectability conditions of Theorem~\ref{theorem:SS_bounds}. As suggested in Section~\ref{subsection:ss_KF}, the set $\{ \mathcal{Z}_i \}_{i \in \mathcal{I}}$ of matrices is computed from the set $\{ (\bm{c}_i,\bm{\sigma}_{i}^{2}) \}_{i \in \mathcal{I}}$ of pairs.

\subsection{Homogeneous Sensor Selection}
\label{section:simulation_01}

In this section, we assume $n_c=420$. For convenience, we also assume $n_p=5$, i.e., a coarse grid space.

In Figure~\ref{figure:homogeneous_ss}a, we compare the steady-state estimation performance of the following sampling policies, randomized greedy, deterministic greedy, and the sampling distribution proposed by \textsc{Algorithm~\ref{alg:04}}, for a range of sampled sensors. Figure~\ref{figure:homogeneous_ss}a shows that the estimation performance $\overline{\lambda}( P_\mathcal{S} )$ of our proposed sampling distribution tends to on-average outperform the estimation performance of both greedy sampling policies. Figure~\ref{figure:homogeneous_ss}a also suggests that sampling a selection from a few candidate sensors, as opposed to many, can result in superior estimation performance since the sampling distributions proposed by \textsc{Algorithm~\ref{alg:04}} tend to be sparse. Figure~\ref{figure:homogeneous_ss}a also shows that the upper and lower bounds on $\overline{\lambda}( P_{\mathcal{S}} )$, denoted as $\overline{\lambda}( U_S )$ and $\overline{\lambda}( L_S )$, resp., become significantly tighter as the number of sampled sensors increases.

For the greedy sampling policy, we compute the steady-state quantity $\overline{\lambda}( P_{\mathcal{G}} )$ of a greedy selection~$\mathcal{G}$ drawn according to \textsc{Algorithm~\ref{alg:10}}\footnote{First, the sensor selection outputted by \textsc{Algorithm}~\ref{alg:10} is randomly chosen if $\gamma \in [0,1)$ and deterministically chosen if $\gamma = 1$. The greedy approaches employed in this work sample \textit{with replacement} in order to be able to properly compare their estimation performance against our proposed sampling policy. In Figure~\ref{figure:homogeneous_ss} and \ref{figure:homogeneous_ss_greedy_time}, we assume $\gamma = 0.10$ for the randomized greedy sampling policy. Second, in step 3 of \textsc{Algorithm}~\ref{alg:10}, the elements of $\mathcal{I}_s$ are sampled uniformly and \textit{without} replacement from $\mathcal{I}$. Third, we employ the notation $[\mathcal{G};e]$ to indicate that the element $e$ is appended to the sequence $\mathcal{G}$.}. Let $P_{\mathcal{G}} \in \mathbb{S}_{++}^{m}$ denote the steady-state error covariance of selection $\mathcal{G}$ and the steady-state solution to $P_{\mathcal{G},(t+1)} = f_2( \, P_{\mathcal{G},(t)} \, , \, C_{\mathcal{G}}^{T} R_{\mathcal{G}}^{-1} C_{\mathcal{G}}^{} \, )$ given $P_{\mathcal{G},(0)} \in \mathbb{S}_{++}^{m}$.

For the proposed sampling distribution, we run $100$ Monte Carlo trials to approximate the mean and standard deviation of the steady-state estimation performance $\overline{\lambda}( P_{\mathcal{S}} )$ for a range of sampled sensors. The bounds $\overline{\lambda}( U_S )$ and $\overline{\lambda}( L_S )$ are computed for the same range of sampled sensors.

\vspace{-2mm}

\begin{algorithm}
\DontPrintSemicolon
  \KwInput{$\gamma$, $n_s$, $\{ \mathcal{Z}_i \}_{i \in \mathcal{I}}$}
  
  Initialize $\mathcal{G} \leftarrow [ \, ]$
  
  \For{$i \in [n_s]$ }
  {

  Randomly choose a subset $\mathcal{I}_s$ of $\mathcal{I}$ : $| \hspace{0.25mm} \mathcal{I}_s | = \lceil \gamma \hspace{0.10mm} n_c \rceil$ 
  
  Compute $g^{*} := \arg \min\nolimits_{\, g \in \mathcal{I}_s} \overline{\lambda} \, ( \, P_{ \, [ \hspace{0.25mm} \mathcal{G} \hspace{0.25mm} ; \hspace{0.25mm} g \hspace{0.25mm} ]} \, )$
  
  Set $\mathcal{G} \leftarrow [ \, \mathcal{G} \, ; \, g^{*} \, ]$
  }
  
  \KwOutput{$\mathcal{G}$}

\caption{Greedy Sensor Selection}
\label{alg:10}
\end{algorithm}

\vspace{-2.5mm}

In Figure~\ref{figure:homogeneous_ss}b, we compare the steady-state estimation performance of two sampling policies, a uniform sampling distribution $u := \{ \hspace{0.25mm} 1/n_c \hspace{0.25mm} \}^{n_c}$, i.e., each candidate sensor is chosen with equal probability, and the sampling distribution proposed by \textsc{Algorithm~\ref{alg:04}}, for a range of sampled sensors. Figure~\ref{figure:homogeneous_ss}b shows that the proposed sampling distribution requires significantly fewer sampled sensors to achieve the same estimation performance. Thus, we show that the uniform distribution is not an appealing sampling policy.

For the proposed sampling policy, the steady-state estimation performance $\overline{\lambda}( U_{S} )$ for a given $n_s$ is computed using the sampling distribution proposed by \textsc{Algorithm~\ref{alg:04}}.

For the uniform sampling policy, the minimum steady-state estimation performance $\overline{\lambda}( U_{U} )$ for a given $n_s$ is computed by substituting the quantity $\epsilon$ in the equations \eqref{eqn:uniform_U} and \eqref{eqn:uniform_L} for $\epsilon_{u}^{*} := 4 \, \varrho_{u}^{*} \, \log{ ( \hspace{0.25mm} 2m / \delta \hspace{0.25mm} ) }$, where
\vspace{-0.5mm}
\begin{align}
\label{prog:u}
\varrho_{u}^{*} := &\arg \min_{\varrho \geq 1} \ \varrho \ \ \mathrm{s.t.} \ \eqref{eqn:Z_inequality}, \, p = u := \{ \hspace{0.25mm} 1/n_c \hspace{0.25mm} \}^{n_c}.
\end{align}
\vspace{-4mm}

Let $\mathcal{U}$ denote a selection drawn via a uniform sampling policy and $P_{\mathcal{U},(t)}$$( P_{\mathcal{U}} )$ denote the filtered (steady-state) error covariance for the uniform selection $\mathcal{U}$. Let $U_{U,(t)}$$( U_{U} )$ and $L_{U,(t)}$$( L_{U} )$ denote the time-dependent (steady-state) bounds of $P_{\mathcal{U},(t)}$$( P_{\mathcal{U}} )$, where $U_{U}$ and $L_{U}$ are the unique p.d. steady-state solutions to the following,
\vspace{-0.5mm}
\begin{gather}
\label{eqn:uniform_U}
U_{U,(t+1)} = f_2( \, U_{U,(t)} \, , \, (1-\epsilon) \, n_s \textstyle\sum\nolimits_{\, i \in [n_c]} u_i \mathcal{Z}_i \, ), \\
\label{eqn:uniform_L}
L_{U,(t+1)} = f_2( \, L_{U,(t)} \, , \, (1+\epsilon) \, n_s \textstyle\sum\nolimits_{\, i \in [n_c]} u_i \, \mathcal{Z}_i \, ),
\end{gather}
\vspace{-0.5mm}
resp., such that $U_{U,(0)}, L_{U,(0)} \in \mathbb{S}_{++}^{m}$. 

In Figure~\ref{figure:homogeneous_ss_greedy_time}a, we show the average time required to execute \textsc{Algorithm~\ref{alg:04}} for a range of candidate sensors. Since the number of sampled sensors does not affect the computational complexity of the optimization programs in Section~\ref{subsection:proposed}, the execution times plotted in Figure~\ref{figure:homogeneous_ss_greedy_time}a hold for any $n_s$.

In Figure~\ref{figure:homogeneous_ss_greedy_time}b, we compare the average time\footnote{Due to the parallel nature of \textsc{Algorithm}~\ref{alg:04}, the average execution time in Figure~\ref{figure:homogeneous_ss_greedy_time}a and Figure~\ref{figure:heterogeneous_ss}b is also averaged over the number of grid points.} required to draw a randomized and deterministic greedy selection according to \textsc{Algorithm~\ref{alg:10}}. Since randomized greedy only considers a fraction of the candidate sensors at each sampling instant, it is expected to outperform its deterministic counterpart in execution time. Figure~\ref{figure:homogeneous_ss_greedy_time}b confirms our expectations.


In the regime of large $n_s / n_c$, the time require to execute \textsc{Algorithm~\ref{alg:04}} is comparable in execution time to both greedy approaches. For large $n_s / n_c$, our algorithm is preferable over greedy approaches since the execution time of \textsc{Algorithm~\ref{alg:04}} is independent of the value of $n_s$.

\begin{figure}
    \centering
    \includegraphics[width=\columnwidth]{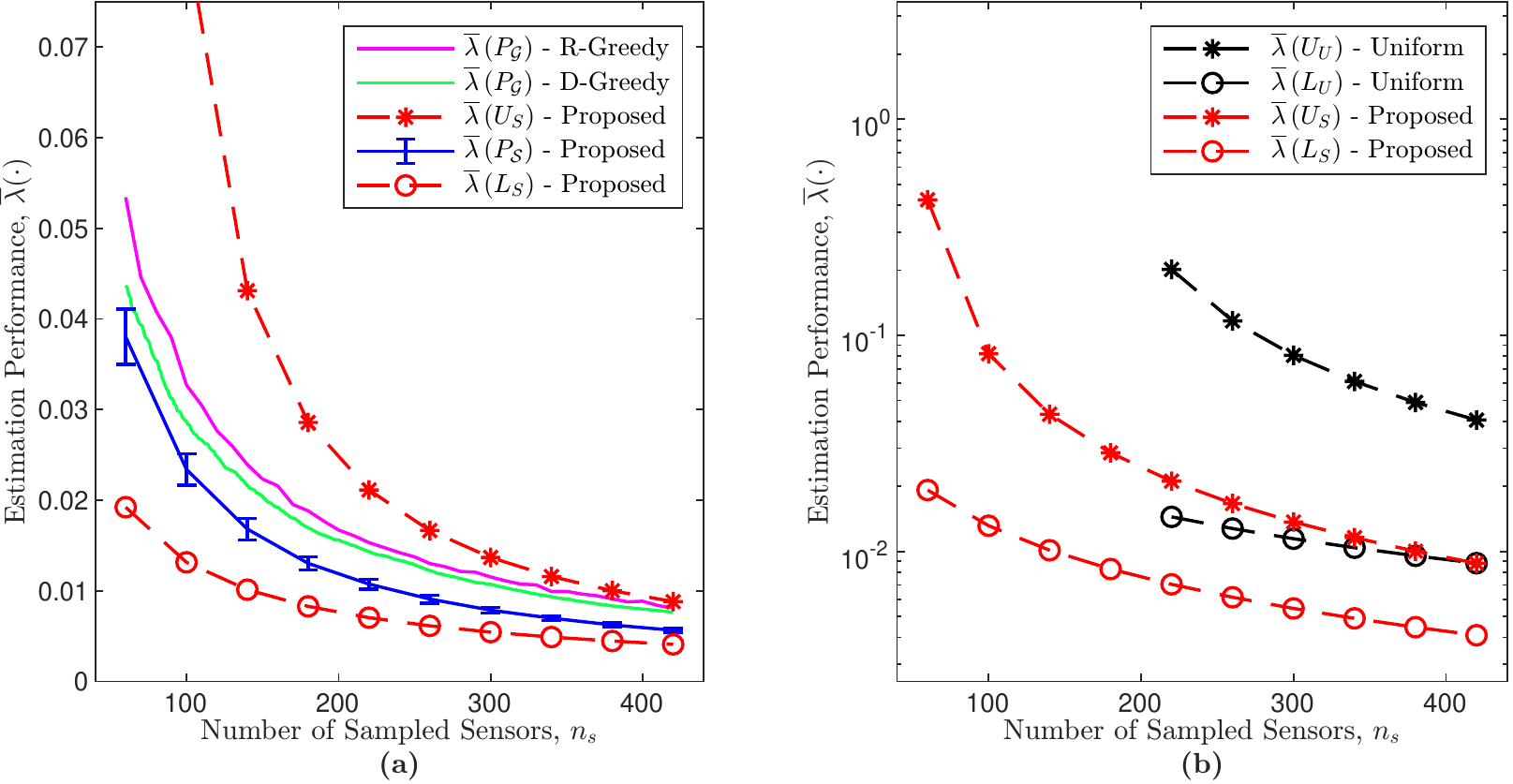}
    \caption{(a) Comparison of the estimation performance of randomized greedy (R-Greedy), deterministic greedy (D-Greedy), and our proposed sampling policy for varying $n_s$. Only the average value of $\overline{\lambda}( P_G )$ for the randomized greedy sampling policy is shown. The average value of $\overline{\lambda}( P_\mathcal{S} )$ is indicated by the (blue-line) curve and the variability of $\overline{\lambda}( P_\mathcal{S} )$ is captured by the standard deviation, where the error bars indicate $\pm$ one standard deviation. (b)~Comparison of the estimation performance of a uniform (black-star) and our proposed (red-star) sampling policy for varying $n_s$.
    }
    \label{figure:homogeneous_ss}
\end{figure}

\begin{figure}
    \centering
    \includegraphics[width=\columnwidth]{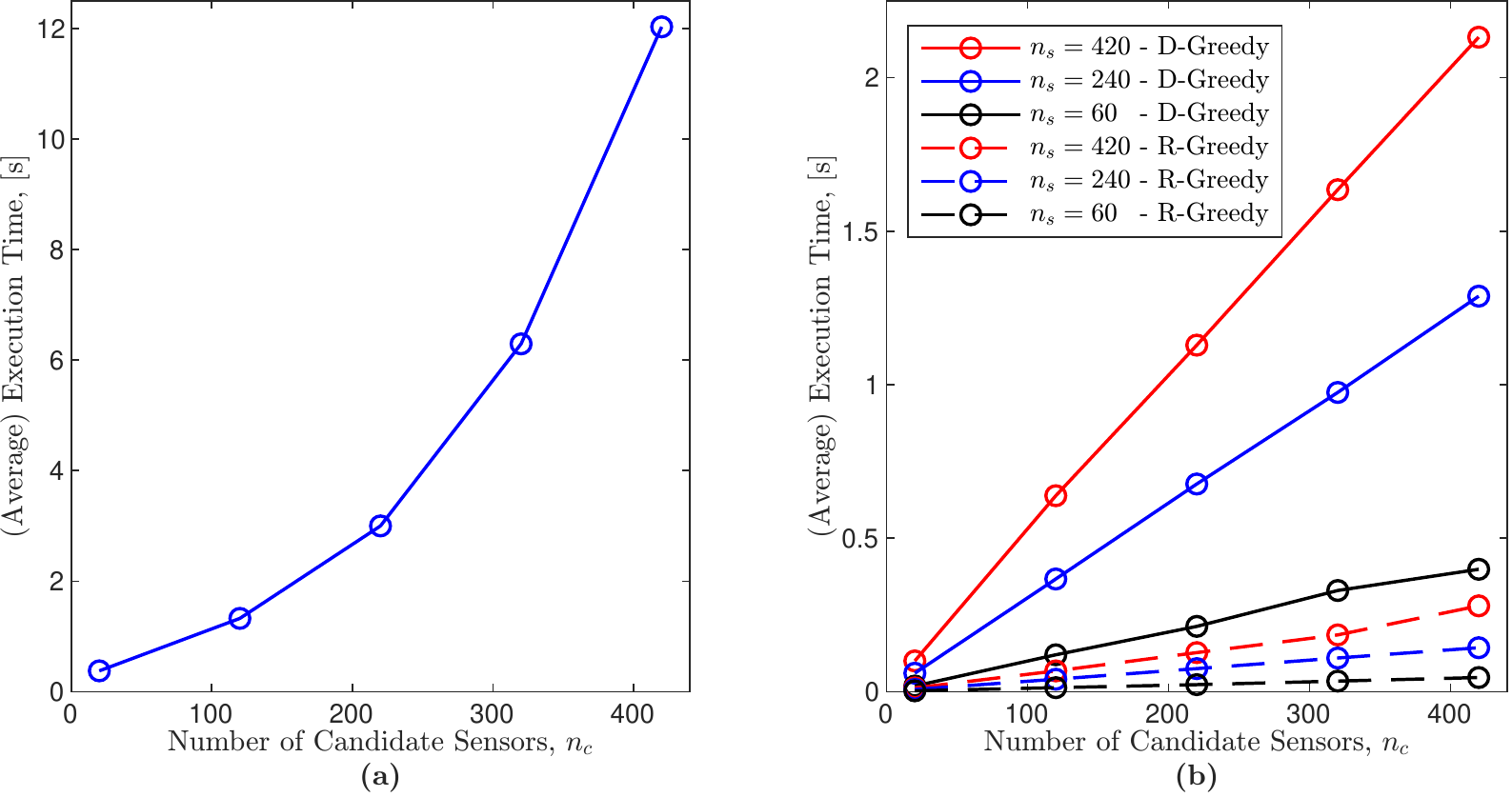}
    \caption{(a) Plot of the average time required to execute the steady-state equivalent of \textsc{Algorithm~\ref{alg:04}} for varying $n_c$. (b) Plot of the average time required to execute \textsc{Algorithm~\ref{alg:10}} for the deterministic greedy (D-Greedy) and randomized greedy (R-Greedy) sampling policy for varying $n_c$ and $n_s$.
    }
    \label{figure:homogeneous_ss_greedy_time}
\end{figure}

\vspace{-2mm}

\subsection{Heterogeneous Sensor Selection}
\label{section:simulation_02}

In this section, we assume ${n_c}^{(i)} = n_c/K$, ${n_s}^{(i)} = n_s/K$, and $\delta^{(i)} = 1-(1-\delta)^{1/K}$ for the $i$-th partition. We justify our reasoning for ${n_c}^{(i)}$, ${n_s}^{(i)}$, and $\delta^{(i)}$ in Remark~\ref{remark:compare}. We also assume $n_c = 840$ and ${n_p}^{(i)} = 5$.

In Figure~\ref{figure:heterogeneous_ss}a, we plot $\overline{\lambda}( U_H )$, i.e., the upper bound of the steady-state quantity $\overline{\lambda}( P_{\mathcal{H}} )$, for a range of sampled sensors and partition sizes. The value of $\overline{\lambda}( U_H )$ is computed using the sampling distributions, i.e., $\{ {p^{*}}^{(i)} \}_{i \in [K]}$, proposed by the heterogeneous version of \textsc{Algorithm~\ref{alg:04}}. In Figure~\ref{figure:heterogeneous_ss}b, we plot the average time required for each partition to execute the heterogeneous version of \textsc{Algorithm~\ref{alg:04}} for the same range of sampled sensors and partition sizes. Figure~\ref{figure:heterogeneous_ss} shows that a minor degradation in estimation performance (of the upper bound) suggests a modest reduction in execution time.

\begin{figure}
    \centering
    \includegraphics[width=\columnwidth]{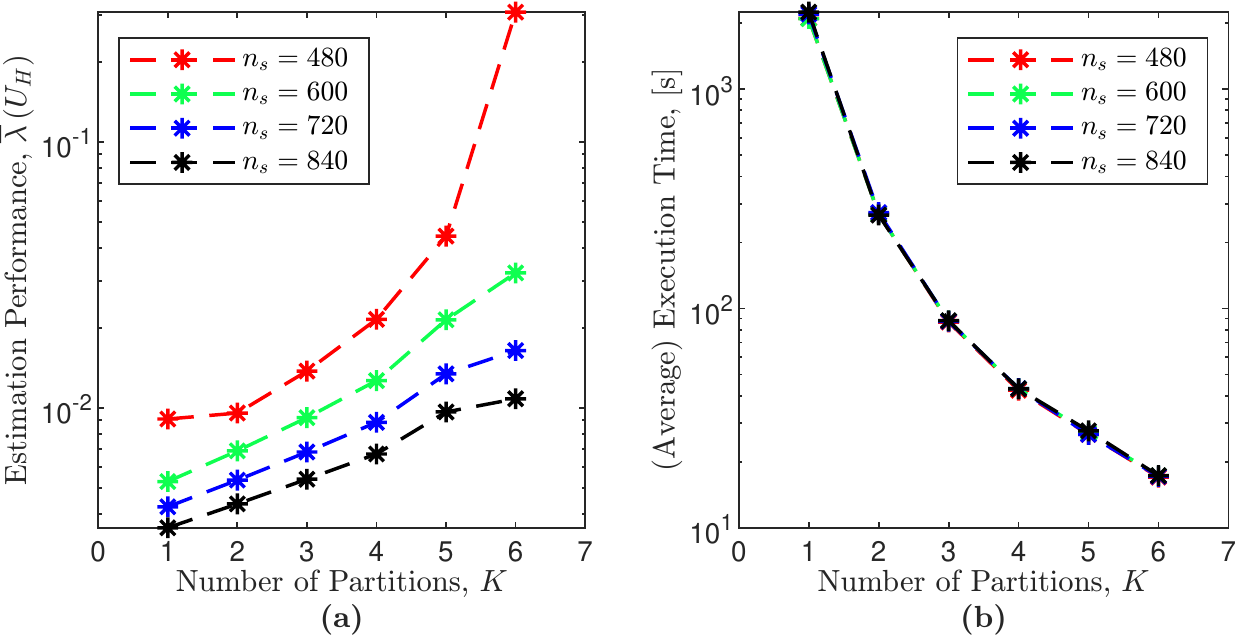}
    \caption{(a) Plot of the steady-state quantity $\overline{\lambda}(U_{\mathcal{H}})$ for varying $n_s$~and~$K$. (b) Plot of the average time required to execute the heterogeneous version of \textsc{Algorithm~\ref{alg:04}} for varying $n_s$ and $K$.
    }
    \label{figure:heterogeneous_ss}
\end{figure}

\vspace{-2mm}

\subsection{Constrained Sensor Selection}
\label{section:simulation_03}

In this section, we assume a uniform set of sampling constraints. Thus, the scalar $k_u$ is sufficient information for fully specifying $k$ according to Remark~\ref{remark:uniform}.

In Figure~\ref{figure:constrained_ss}a, we plot $(\alpha-\delta)$ and $(1-\delta/\alpha)$ to show how their values increase as $\alpha$ approaches unity. Figure~\ref{figure:constrained_ss}a also shows that the probability associated with the CI \eqref{eqn:CI-C2} of Theorem~\ref{theorem:conditional_bounds} is large for a wider range of $\alpha$ values, i.e., for a broader range of sampling constraints, in contrast to the probability associated with the CI \eqref{eqn:CI-C} of Theorem~\ref{theorem:intersection_bounds}.



In Figure~\ref{figure:constrained_ss}b, we plot the parameter $\alpha$ for a range of uniformity factors and sampled sensors. The sampling distributions proposed for each $n_s$ in Section~\ref{section:simulation_01} are used to compute $\alpha$ for the range of $k_u$ and $n_s$ under consideration.

\begin{figure}
    \centering
    \includegraphics[width=\columnwidth]{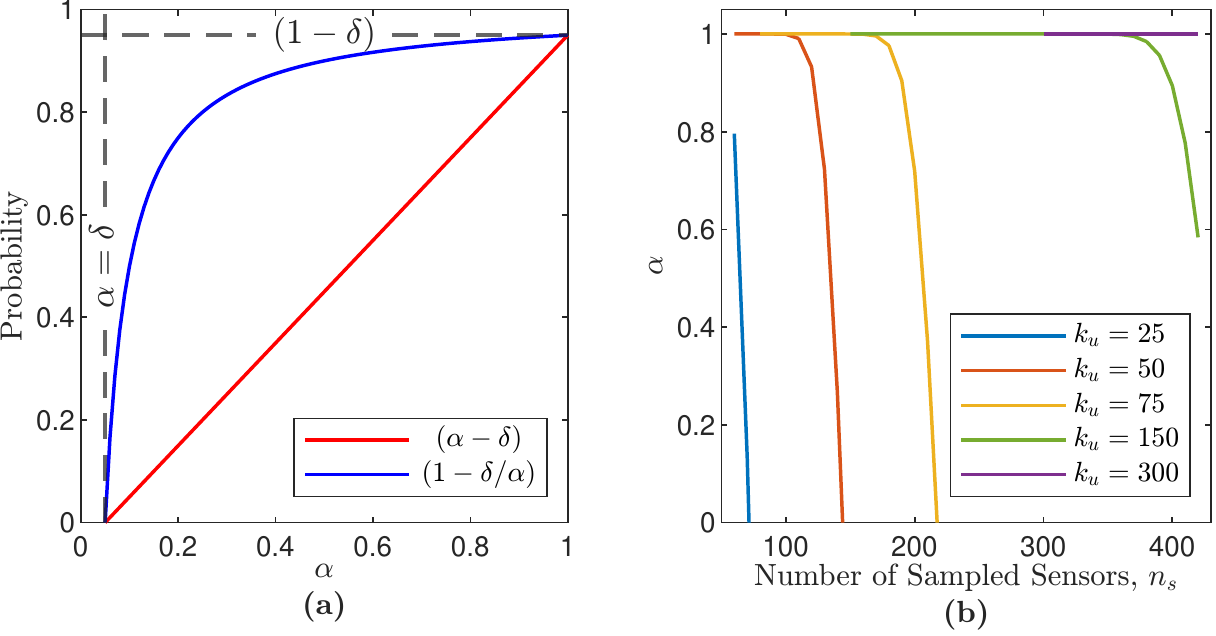}
    \caption{(a) Comparison of the values of $(\alpha-\delta)$ and $(1-\alpha/\delta)$. Note that the former quantities lie on interval $[0,1]$ if $\alpha \in [\delta,1]$. (b) Plot of $\alpha$ (a lower bound on $\mathbb{P}[ \mathcal{K} ]$) for varying $k_u$ and $n_s$.
    }
    \label{figure:constrained_ss}
\end{figure}

\section{Conclusion}
\label{section:conclusion}

In this work, we derived novel CIs for the estimation error covariance arising in three types of stochastic sensor selections and proposed a procedure for finding a sampling distribution that optimizes estimation performance. We also highlighted a few noteworthy properties regarding our CIs, e.g., our semi-definite guarantees as outlined in Remark~\ref{remark:bounds} allow our CIs to bound standard eigenvalue-based metrics irrespective of whether the metrics are submodular or not.
    %
Our proposed sampling distribution was also shown to outperform three other sampling policies: uniform, deterministic greedy, and randomized greedy. 
A noteworthy observation is that the average time required to execute \textsc{Algorithm}~\ref{alg:04} is comparable to greedy approaches for large $n_s/n_c$.


A future direction of research consists of extending our CIs to the class of sampling \textit{without replacement} policies. Other directions of research consist of applying our concentration-based approach to the dual problem, i.e., actuator design, and the joint problem of actuator and sensor selection.

\appendix 
In the section, we provide the complete proof for several claims referenced throughout this paper.

\subsection{Proof of Lemma~\ref{lemma:event_bound}}
In this proof, we show that $\mathbb{P}[ \mathcal{K} ] \geq \Phi ( \alpha )$. We state the following facts for subsequent use in the proof,
\begin{enumerate}[ label=(\roman*) , leftmargin= 8.0mm ]
    \item $\mathbb{P}[ \, \bigcup_{\, i \in [n_c]} \{ I_i \nleq k_i \} \, ] \leq \sum_{\, i \in [n_c]} \mathbb{P}[ I_i \nleq k_i ]$,
    \item $\mathbb{P}[ \, I_i \nleq k_i \, ] = 1 - \mathbb{P}[ \, I_i \leq k_i \, ]$,
    \item $\mathbb{P}[ \, I_i \leq k_i \, ] = \textstyle\sum_{j=0}^{k_i} \mathbb{P}[ \, I_i = j \, ]$,
\end{enumerate}
where fact~(i) holds for any union of events according to Boole's inequality, fact~(ii) is a property of complements, and fact~(iii) holds since the event $\{ I_i \leq k_i \}$ is equivalent to $\textstyle\bigcup_{j=0}^{k_i} \{ I_i = j \}$, a union of mutually exclusive events. 

First, we derive the inequality $\mathbb{P}[ \mathcal{K}^{c} ] \leq \beta$, i.e.,
\begin{gather}
\mathbb{P}[ \mathcal{K}^{c} ] 
= \mathbb{P}[ \, \textstyle\bigcup_{i=1}^{n_c} \{ I_i \nleq k_i \} \, ]
\overset{(a)}{\leq} \textstyle\sum_{i=1}^{n_c} \mathbb{P}[ I_i \nleq k_i ] \nonumber \\
\label{eqn:inequality_beta}
\overset{(b)}{=} n_c - \textstyle\sum_{i=1}^{n_c} \textstyle\sum_{j=0}^{k_i} f_5 ( n_s, j , p_i ) =: \beta,
\end{gather}
where step~(a) employs fact~(i), and step~(b) employs the following, (ii), (iii), and \eqref{eqn:C}, in that order. Note that $\beta \geq 0$ since $\mathbb{P}[ \mathcal{K}^{c} ] \in [0,1]$. We finally derive the inequality $\mathbb{P}[ \mathcal{K} ] \geq \Phi ( \alpha )$ by applying \eqref{eqn:inequality_beta} to the true statement $\mathbb{P}[ \mathcal{K} ] = 1 - \mathbb{P}[ \mathcal{K}^{c} ]$, i.e., $\mathbb{P}[ \mathcal{K} ] \geq (1-\beta) =: \alpha$. Note that $\alpha \leq 1$ since $\beta \geq 0$.
\qed

\subsection{Proof of Theorem \ref{theorem:SS_bounds_t}}
In this proof, we derive the CI for the filtered error covariance of a homogeneous selection.

First, the statistical properties of the covariance equations of the Kalman filter are stated for a fixed selection of sensors, i.e., a deterministic selection, in Lemma~\ref{lemma:KF_01}. Let $C$ and $R$ denote the output and measurement noise covariance matrix, resp., for a fixed selection. Refer to Section~4 of Chapter~5 of \cite{anderson2012optimal} or Theorem~23 of \cite{simon2006optimal} for a generalization of Lemma~\ref{lemma:KF_01}. Following the lemma, we formulate two corollaries for subsequent use in the proof. Corollary~\ref{corollary:KF_00} is a special case of Lemma~\ref{lemma:KF_01} and Corollary~\ref{eqn:corollary:KF_01} is an alternate formulation of Corollary~\ref{corollary:KF_00}.

\begin{lemma}
\label{lemma:KF_01}
Assume $Q \in \mathbb{S}_{+}^{m}$, $C \in \mathbb{R}^{n_s \times m}$, and $R \in \mathbb{S}_{++}^{n_s}$. Let $P_{(t)}$ and $ \Sigma_{(t+1)}$ denote the filtered and predicted error covariance at time instants $t$ and $t+1$, resp., where $P_{(t)} = f_1( \hspace{0.25mm} \Sigma_{(t)} , C , R \hspace{0.25mm} )$ and $\Sigma_{(t+1)} = f_4( \hspace{0.25mm} P_{(t)} \hspace{0.25mm} )$. If $(A,Q^{1/2})$ is stabilizable and $(A,C)$ is detectable, then the filtered error covariance $P_{(t)}\in \mathbb{S}_{+}^{m}$ converges to the unique p.s.d. steady-state solution $P \in \mathbb{S}_{+}^{m}$ for any $\Sigma_{(0)} \in \mathbb{S}_{+}^{m}$.
\end{lemma}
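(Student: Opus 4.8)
This is the classical convergence theorem for the Riccati recursion of the Kalman filter, and I would prove it by the standard monotone-operator argument on the cone $\mathbb{S}_{+}^{m}$. First I would eliminate $P_{(t)}$ by composing the two updates into a single predicted-covariance recursion $\Sigma_{(t+1)} = \mathcal{R}(\Sigma_{(t)})$, where $\mathcal{R}(\Sigma) := f_4\big(f_1(\Sigma, C, R)\big) = A\,f_1(\Sigma, C, R)\,A^T + Q$. Since $R \in \mathbb{S}_{++}^{n_s}$ keeps $\Gamma + \Xi\Lambda\Xi^T$ invertible, $f_1(\cdot, C, R)$ is continuous on $\mathbb{S}_{+}^{m}$, so it suffices to show that $\Sigma_{(t)}$ converges to a unique $\bar\Sigma \in \mathbb{S}_{+}^{m}$; the limit of $P_{(t)}$ is then $P := f_1(\bar\Sigma, C, R)$.

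The engine of the proof is the \emph{gain-optimization identity} $\mathcal{R}(\Sigma) = \min_{K}\big[(A - KC)\,\Sigma\,(A - KC)^T + K R K^T + Q\big]$, the minimum taken in the Loewner order and attained at the (one-step) Kalman gain. This exhibits $\mathcal{R}$ as an infimum of affine, order-preserving maps of $\Sigma$, hence $\mathcal{R}$ itself is order-preserving: $0 \preceq \Sigma' \preceq \Sigma''$ implies $\mathcal{R}(\Sigma') \preceq \mathcal{R}(\Sigma'')$. Now run the recursion from $\Sigma_{(0)} = 0_m$: since $\Sigma_{(1)} = Q \succeq \Sigma_{(0)}$, monotonicity makes $\{\Sigma_{(t)}\}$ nondecreasing. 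Detectability of $(A,C)$ supplies a constant $K_0$ with $A - K_0 C$ Schur; the Lyapunov equation $\Pi = (A - K_0 C)\Pi(A - K_0 C)^T + K_0 R K_0^T + Q$ then has a finite p.s.d. solution $\Pi$, and the $\min$-form gives $\mathcal{R}(\Pi) \preceq \Pi$, so by induction $\Sigma_{(t)} \preceq \Pi$ for all $t$. A nondecreasing sequence in $\mathbb{S}_{+}^{m}$ bounded above converges, and continuity of $\mathcal{R}$ makes the limit $\bar\Sigma$ a p.s.d. solution of the discrete algebraic Riccati equation.

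For uniqueness and global convergence I would invoke the classical fact that, under stabilizability of $(A, Q^{1/2})$ together with detectability of $(A,C)$, the DARE has a \emph{unique stabilizing} p.s.d. solution $\bar\Sigma$ (the steady-state closed-loop matrix $A - \bar K C$ is Schur) and that \emph{every} p.s.d. solution is stabilizing, hence equals $\bar\Sigma$. Given an arbitrary $\Sigma_{(0)} \in \mathbb{S}_{+}^{m}$, choose $c \geq 1$ with $\Sigma_{(0)} \preceq c\Pi$; the $\min$-form and $c \geq 1$ yield $\mathcal{R}(c\Pi) \preceq c\Pi$, so $\{\mathcal{R}^{t}(c\Pi)\}$ is nonincreasing and bounded below by $0_m$, hence converges to a fixed point, which by uniqueness is $\bar\Sigma$. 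Monotonicity then sandwiches $\mathcal{R}^{t}(0_m) \preceq \Sigma_{(t)} = \mathcal{R}^{t}(\Sigma_{(0)}) \preceq \mathcal{R}^{t}(c\Pi)$, with both ends tending to $\bar\Sigma$, so $\Sigma_{(t)} \to \bar\Sigma$ and therefore $P_{(t)} \to f_1(\bar\Sigma, C, R) =: P$.

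The monotone-sequence arguments are routine; the one genuinely delicate step is the uniqueness/stabilizing-solution claim, which is exactly where \emph{both} hypotheses are indispensable — stabilizability of $(A, Q^{1/2})$ forces the steady-state closed-loop matrix to be Schur, and detectability of $(A,C)$ rules out spurious p.s.d. solutions — and whose proof requires the only non-trivial manipulations, namely those of the closed-loop matrix and its associated Lyapunov (in)equalities. Since this lemma is standard, in the write-up one may alternatively just cite \cite{anderson2012optimal} (Sec.~4, Ch.~5) or \cite{simon2006optimal} (Thm.~23), as the paper does.
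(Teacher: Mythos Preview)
Your proof sketch is correct, but note that the paper does not actually supply its own proof of this lemma: it simply states the result and defers to the textbook references \cite{anderson2012optimal} (Sec.~4, Ch.~5) and \cite{simon2006optimal} (Thm.~23), exactly as you anticipate in your final paragraph. The monotone Riccati-operator argument you outline --- the gain-optimization identity, order-preservation of $\mathcal{R}$, the monotone sandwich from $0_m$ and from a detectability-generated upper bound $\Pi$, and the DARE uniqueness step under stabilizability of $(A,Q^{1/2})$ and detectability of $(A,C)$ --- is precisely the classical proof one finds in those references, so there is no meaningful discrepancy between your approach and the paper's; you have simply written out what the paper cites.
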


\begin{corollary}
\label{corollary:KF_00}
Assume $Q \in \mathbb{S}_{++}^{m}$. Let $P_{(t)}$ and $ \Sigma_{(t+1)}$ denote the filtered and predicted error covariance at time instants $t$ and $t+1$, resp., where $P_{(t)} = f_3( \hspace{0.25mm} \Sigma_{(t)} , C , R \hspace{0.25mm} )$ and $\Sigma_{(t+1)} = f_4( \hspace{0.25mm} P_{(t)} \hspace{0.25mm} )$. If $(A,C)$ is detectable, then the filtered error covariance $P_{(t)} \in \mathbb{S}_{++}^{m}$ converges to the unique p.d. steady-state solution $P \in \mathbb{S}_{++}^{m}$ for any $\Sigma_{(0)} \in \mathbb{S}_{++}^{m}$.
\end{corollary}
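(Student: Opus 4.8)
The plan is to derive this corollary as a positive-definite refinement of Lemma~\ref{lemma:KF_01}, after two reductions: identifying the recursion built from $f_3$ with the one built from $f_1$ on the domain where it is actually run, and then upgrading the p.s.d.\ conclusions of the lemma to p.d.\ ones using $Q \in \mathbb{S}_{++}^{m}$.

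First I would record that whenever $\hat{\Lambda} \in \mathbb{S}_{++}^{m}$, the matrix inversion lemma gives $f_3(\hat{\Lambda}, C, R) = (\hat{\Lambda}^{-1} + C^T R^{-1} C)^{-1} = \hat{\Lambda} - \hat{\Lambda} C^T (R + C \hat{\Lambda} C^T)^{-1} C \hat{\Lambda} = f_1(\hat{\Lambda}, C, R)$, which is legitimate because $R \in \mathbb{S}_{++}^{n_s}$ and $\hat{\Lambda}$ is invertible. So the $f_3$-recursion of the corollary coincides with the $f_1$-recursion of the lemma as long as every predicted covariance $\Sigma_{(t)}$ remains invertible. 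To see that it does, I would induct: $\Sigma_{(0)} \in \mathbb{S}_{++}^{m}$ by hypothesis; if $\Sigma_{(t)} \succ 0$, then $P_{(t)} = (\Sigma_{(t)}^{-1} + C^T R^{-1} C)^{-1} \succ 0$ since $\Sigma_{(t)}^{-1} \succ 0$ and $C^T R^{-1} C \succeq 0$; and then $\Sigma_{(t+1)} = f_4(P_{(t)}) = A P_{(t)} A^T + Q \succeq Q \succ 0$ because $Q \in \mathbb{S}_{++}^{m}$. Hence $\Sigma_{(t)}, P_{(t)} \in \mathbb{S}_{++}^{m}$ for all $t \geq 0$, and along this trajectory $f_3 \equiv f_1$.

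Next I would invoke Lemma~\ref{lemma:KF_01}. Since $Q \in \mathbb{S}_{++}^{m}$, the factor $Q^{1/2}$ has full rank, so $(A, Q^{1/2})$ is controllable, hence stabilizable; together with the hypothesis that $(A,C)$ is detectable, the lemma applies, and the sequence $\{P_{(t)}\}$ — which by the previous paragraph equals the sequence generated with $f_1$ from $\Sigma_{(0)} \in \mathbb{S}_{++}^{m} \subset \mathbb{S}_{+}^{m}$ — converges to the unique p.s.d.\ steady-state solution $P$. To finish, I would upgrade $P$ to p.d.: at steady state $P = f_3(f_4(P), C, R)$, and $f_4(P) = A P A^T + Q \succeq Q \succ 0$ is invertible, so $P = (f_4(P)^{-1} + C^T R^{-1} C)^{-1} \succ 0$; uniqueness within $\mathbb{S}_{++}^{m}$ is inherited from the uniqueness of the p.s.d.\ solution in Lemma~\ref{lemma:KF_01}.

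I do not expect a deep obstacle here: the content is essentially the reduction to Lemma~\ref{lemma:KF_01} via the matrix inversion lemma. The only point demanding care is the bookkeeping of invertibility at every iteration, so that $f_3$ is well-defined and coincides with $f_1$ and so that the limit is genuinely p.d.; the assumption $Q \in \mathbb{S}_{++}^{m}$ is precisely what makes this automatic.
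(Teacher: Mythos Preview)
Your proposal is correct and follows essentially the same route as the paper: both reduce to Lemma~\ref{lemma:KF_01} by (i) using $Q \in \mathbb{S}_{++}^{m}$ to satisfy the stabilizability hypothesis, (ii) invoking the matrix inversion lemma to identify $f_3$ with $f_1$ on positive-definite arguments, and (iii) upgrading the steady-state solution to p.d. Your version is simply more explicit about the inductive bookkeeping that keeps $\Sigma_{(t)}$ invertible at every step, which the paper leaves implicit.
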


\begin{proof}
By assuming $Q$ and $\Sigma_{(0)}$ are elements of the set $\mathbb{S}_{++}^{m}$, Lemma~\ref{lemma:KF_01} changes accordingly: (i) the stabilizability condition is satisfied, (ii) the quantity $f_1( \hspace{0.25mm} \Sigma_{(t)} , C , R \hspace{0.25mm} )$ is equivalent to $f_3( \hspace{0.25mm} \Sigma_{(t)} , C , R \hspace{0.25mm} )$ due to the matrix inversion lemma, and (iii) the steady-state solution of $P_{(t)}$ is p.d.
\end{proof}

\begin{corollary}
\label{eqn:corollary:KF_01}
Assume $(A,C)$ is detectable. Suppose $Q \in \mathbb{S}_{++}^{m}$. Then, the covariance equation $P_{(t+1)} = f_2( \hspace{0.25mm} P_{(t)} , C^{T} R^{-1} C \hspace{0.25mm} )$ of the filtered error covariance matrix converges to the unique p.d. steady-state solution $P \in \mathbb{S}_{++}^{m}$ for any $P_{(0)} \in \mathbb{S}_{++}^{m}$.
\end{corollary}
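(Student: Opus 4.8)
The plan is to recognize that the $f_2$-recursion is, after eliminating the predicted covariance, precisely the filtering recursion already analyzed in Corollary~\ref{corollary:KF_00}, and then to invoke that corollary. First I would record the purely algebraic identity, valid for every $\Lambda \in \mathbb{S}_{+}^{m}$ (so that $A\Lambda A^{T}+Q \in \mathbb{S}_{++}^{m}$ is invertible because $Q \in \mathbb{S}_{++}^{m}$),
\begin{align*}
f_2\big(\Lambda, C^{T} R^{-1} C\big) = \big((A\Lambda A^{T}+Q)^{-1} + C^{T} R^{-1} C\big)^{-1} = f_3\big(f_4(\Lambda), C, R\big),
\end{align*}
which is immediate from Definition~\ref{def:functions}, since $f_4(\Lambda) = A\Lambda A^{T}+Q$ and $f_3(\hat{\Lambda}, C, R) = (\hat{\Lambda}^{-1} + C^{T} R^{-1} C)^{-1}$. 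Hence $P_{(t+1)} = f_2(P_{(t)}, C^{T} R^{-1} C)$ is equivalent to the pair of updates $\Sigma_{(t+1)} = f_4(P_{(t)})$ and $P_{(t+1)} = f_3(\Sigma_{(t+1)}, C, R)$ appearing in Corollary~\ref{corollary:KF_00}.

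Next I would check the hypotheses of Corollary~\ref{corollary:KF_00}: $(A,C)$ is detectable and $Q\in\mathbb{S}_{++}^{m}$ by assumption, so the only thing to handle is the initialization. Corollary~\ref{corollary:KF_00} is phrased with a p.d.\ \emph{predicted} covariance at the starting instant, whereas here we are given a p.d.\ \emph{filtered} covariance $P_{(0)}\in\mathbb{S}_{++}^{m}$. This is harmless: $\Sigma_{(1)} = f_4(P_{(0)}) = A P_{(0)} A^{T} + Q \succeq Q \succ 0$, so $\Sigma_{(1)}\in\mathbb{S}_{++}^{m}$, and applying Corollary~\ref{corollary:KF_00} on the time axis $t\ge 1$ gives $P_{(t)}\in\mathbb{S}_{++}^{m}$ for all $t\ge 1$ together with convergence of $P_{(t)}$ to the unique p.d.\ solution $P$ of $P = f_3(f_4(P),C,R)$; by the identity above this fixed-point equation is exactly $P = f_2(P, C^{T} R^{-1} C)$.

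Finally, uniqueness of $P$ as a p.d.\ steady-state solution of the $f_2$-recursion is inherited directly from Corollary~\ref{corollary:KF_00}, since the two recursions --- and therefore their fixed-point equations --- coincide. I do not expect any genuine obstacle here: the content is the one-line identity $f_2(\Lambda, C^{T} R^{-1} C) = f_3(f_4(\Lambda), C, R)$ together with the observation that a p.d.\ initial filtered covariance is carried by $f_4$ into the p.d.\ predicted covariance required by Corollary~\ref{corollary:KF_00}; everything else is a citation.
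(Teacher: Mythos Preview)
Your proposal is correct and follows essentially the same route as the paper: establish the identity $f_2(\Lambda, C^{T} R^{-1} C) = f_3(f_4(\Lambda), C, R)$ and then invoke Corollary~\ref{corollary:KF_00}. Your treatment of the initialization---pushing a given $P_{(0)}\in\mathbb{S}_{++}^{m}$ forward to $\Sigma_{(1)}=f_4(P_{(0)})\succeq Q\succ 0$ and applying Corollary~\ref{corollary:KF_00} from time~$1$---is in fact slightly more direct than the paper's, which argues in the reverse direction that any p.d.\ $\Sigma_{(0)}$ produces a p.d.\ $P_{(0)}$.
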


\begin{proof}
First, we merge the predicted and filtered covariance matrices of Corollary~\ref{corollary:KF_00} at time instant $t+1$ to derive the equality $P_{(t+1)} = f_3( \hspace{0.25mm} f_4( \hspace{0.25mm} P_{(t)} \hspace{0.25mm} ) , C , R \hspace{0.25mm} ) = f_2( \hspace{0.25mm} P_{(t)} , C^{T} R^{-1} C \hspace{0.25mm} )$. Next, observe that $P_{(0)} \in \mathbb{S}_{++}^{m}$ for any $\Sigma_{(0)} \in \mathbb{S}_{++}^{m}$ since $P_{(t)} = f_3( \hspace{0.25mm} \Sigma_{(t)} , C , R \hspace{0.25mm} )$. Finally, the fact that the covariance equation of Corollary~\ref{eqn:corollary:KF_01} converges to a unique p.d. solution directly follows from Corollary~\ref{corollary:KF_00} and omitted for brevity.
\end{proof}

Next, we formulate a lemma and a corollary for subsequent use in the proof. In Lemma~\ref{lemma:KF_03}, we state the statistical properties of two Kalman filters, where the quantities of the first and second filter are denoted by a subscript of $1$ and $2$, resp. Lemma~\ref{lemma:KF_03} directly follows from Corollary~\ref{corollary:KF_00}. In Corollary~\ref{corollary:KF_01}, a corollary of Lemma~\ref{lemma:KF_03}, we compare the filtered and predicted error covariance of the two filters in the semi-definite sense.

\begin{lemma}
\label{lemma:KF_03}
Assume $Q \in \mathbb{S}_{++}^{m}$, $C_i \in \mathbb{R}^{n_s \times m}$, and $R_{i} \in \mathbb{S}_{++}^{n_s}$. Let $P_{i,(t)}$ and $\Sigma_{i,(t+1)}$ denote the filtered and predicted error covariance of the Kalman filter $i \in \{1,2\}$ at time instants~$t$ and $t+1$, resp., where $P_{i,(t)} = f_3( \hspace{0.25mm} \Sigma_{i,(t)} , C_{i} , R_{i} \hspace{0.25mm} )$ and $\Sigma_{i,(t+1)} = f_4( \hspace{0.25mm} P_{i,(t)} \hspace{0.25mm} )$. If $(A,C_i)$ is detectable, then the filtered error covariance $P_{i,(t)} \in \mathbb{S}_{++}^{m}$ converges to the unique p.d. steady-state solution $P_{i} \in \mathbb{S}_{++}^{m}$ for any $\Sigma_{i,(0)} \in \mathbb{S}_{++}^{m}$.
\end{lemma}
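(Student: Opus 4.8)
The plan is to recognize that Lemma~\ref{lemma:KF_03} is simply two independent applications of Corollary~\ref{corollary:KF_00}. The two Kalman filters indexed by $i=1$ and $i=2$ share only the state matrix $A$ and the process noise covariance $Q$; the recursion $P_{i,(t)} = f_3(\Sigma_{i,(t)}, C_i, R_i)$, $\Sigma_{i,(t+1)} = f_4(P_{i,(t)})$ for filter $i$ involves no quantity belonging to filter $j \neq i$. Hence the two filters are completely decoupled and it suffices to analyze each one in isolation.

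First I would fix $i \in \{1,2\}$ and verify that the hypotheses of Corollary~\ref{corollary:KF_00} hold for that filter: $Q \in \mathbb{S}_{++}^{m}$ is assumed, $R_i \in \mathbb{S}_{++}^{n_s}$ ensures that $f_3(\cdot, C_i, R_i)$ is well defined, and $(A, C_i)$ is detectable by hypothesis. The covariance recursion in the statement of the lemma is verbatim the one appearing in Corollary~\ref{corollary:KF_00} with $C = C_i$ and $R = R_i$, so no reformulation is needed. Applying Corollary~\ref{corollary:KF_00} then yields directly that $P_{i,(t)} \in \mathbb{S}_{++}^{m}$ for all $t \geq 0$ whenever $\Sigma_{i,(0)} \in \mathbb{S}_{++}^{m}$, and that $P_{i,(t)}$ converges to the unique p.d.\ steady-state solution $P_i \in \mathbb{S}_{++}^{m}$. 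Repeating the argument for the remaining index completes the proof.

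There is no substantive obstacle here; the only point worth flagging is to make explicit that positive definiteness of $Q$ (rather than mere positive semidefiniteness) is what renders $(A, Q^{1/2})$ stabilizable for free, which is precisely the hypothesis that lets Corollary~\ref{corollary:KF_00} — itself a specialization of Lemma~\ref{lemma:KF_01} via the matrix inversion lemma — be invoked. A fully self-contained version would instead cite Lemma~\ref{lemma:KF_01}, note that $Q \succ 0$ supplies both the stabilizability of $(A, Q^{1/2})$ and the equality of the information-form recursion $f_3$ with the covariance-form recursion $f_1$, and then conclude; but routing through Corollary~\ref{corollary:KF_00} is the shortest path.
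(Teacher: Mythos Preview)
Your proposal is correct and matches the paper's own treatment: the paper simply states that Lemma~\ref{lemma:KF_03} ``directly follows from Corollary~\ref{corollary:KF_00}'' without further elaboration, which is exactly the two independent applications you describe. Your additional remarks on the decoupling of the two filters and on the role of $Q \in \mathbb{S}_{++}^{m}$ in guaranteeing stabilizability are accurate and make explicit what the paper leaves implicit.
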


\begin{corollary}
\label{corollary:KF_01}
If the following conditions, $Q, \Sigma_{1,(t)} \in \mathbb{S}_{++}^{m}$,
\begin{gather}
\label{eqn:c1}
\Sigma_{1,(t)} \preceq \Sigma_{2,(t)}, \
C_{2}^{T} R_{2}^{-1} C_{2}^{} \preceq C_{1}^{T} R_{1}^{-1} C_{1}^{},
\end{gather}
hold for any time instant $t$, then $P_{1,(t)}, \Sigma_{1,(t+1)} \in \mathbb{S}_{++}^{m}$,
\begin{gather}
\label{eqn:c5}
P_{1,(t)} \preceq P_{2,(t)}, \ \Sigma_{1,(t+1)} \preceq \Sigma_{2,(t+1)}.
\end{gather}
\end{corollary}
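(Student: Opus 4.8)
The plan is to reduce everything to three elementary facts about the Löwner order: (a) if $0 \prec X \preceq Y$ then $Y^{-1} \preceq X^{-1}$ (inversion is order-reversing on $\mathbb{S}_{++}^m$); (b) $A_1 \preceq B_1$ and $A_2 \preceq B_2$ imply $A_1 + A_2 \preceq B_1 + B_2$; and (c) $X \preceq Y$ implies $MXM^T \preceq MYM^T$ for any conformable $M$ (congruence preserves the order). Throughout I use the explicit forms $P_{i,(t)} = f_3(\Sigma_{i,(t)},C_i,R_i) = (\Sigma_{i,(t)}^{-1} + C_i^T R_i^{-1} C_i)^{-1}$ and $\Sigma_{i,(t+1)} = f_4(P_{i,(t)}) = A\,P_{i,(t)}\,A^T + Q$ from Definition~\ref{def:functions}, together with $Q \in \mathbb{S}_{++}^m$ and $R_i \in \mathbb{S}_{++}^{n_s}$ as assumed in Lemma~\ref{lemma:KF_03}.

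First I would dispose of the positive-definiteness claims. Since $\Sigma_{1,(t)} \in \mathbb{S}_{++}^m$, the matrix $\Sigma_{1,(t)}^{-1}$ is p.d., and $C_1^T R_1^{-1} C_1 \in \mathbb{S}_{+}^m$, so $\Sigma_{1,(t)}^{-1} + C_1^T R_1^{-1} C_1 \in \mathbb{S}_{++}^m$; inverting gives $P_{1,(t)} \in \mathbb{S}_{++}^m$. Then $\Sigma_{1,(t+1)} = A\,P_{1,(t)}\,A^T + Q \succeq Q \succ 0$, so $\Sigma_{1,(t+1)} \in \mathbb{S}_{++}^m$. I would also record here that $\Sigma_{2,(t)} \succeq \Sigma_{1,(t)} \succ 0$ forces $\Sigma_{2,(t)} \in \mathbb{S}_{++}^m$, so that all the inverses appearing below are legitimate and $f_3$ is well-defined for both filters.

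Next I would prove $P_{1,(t)} \preceq P_{2,(t)}$ by applying fact~(a) twice. From $0 \prec \Sigma_{1,(t)} \preceq \Sigma_{2,(t)}$ we get $\Sigma_{2,(t)}^{-1} \preceq \Sigma_{1,(t)}^{-1}$. Combining this with the hypothesis $C_2^T R_2^{-1} C_2 \preceq C_1^T R_1^{-1} C_1$ via fact~(b) yields
\begin{align*}
\Sigma_{2,(t)}^{-1} + C_2^T R_2^{-1} C_2 \preceq \Sigma_{1,(t)}^{-1} + C_1^T R_1^{-1} C_1,
\end{align*}
where both sides lie in $\mathbb{S}_{++}^m$ (p.d. plus p.s.d.). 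Applying fact~(a) once more reverses the inequality of the inverses, giving exactly $P_{1,(t)} = (\Sigma_{1,(t)}^{-1} + C_1^T R_1^{-1} C_1)^{-1} \preceq (\Sigma_{2,(t)}^{-1} + C_2^T R_2^{-1} C_2)^{-1} = P_{2,(t)}$. Finally, for the prediction step I would apply fact~(c) with $M = A$ to $P_{1,(t)} \preceq P_{2,(t)}$ and add $Q$ to both sides, obtaining $\Sigma_{1,(t+1)} = A\,P_{1,(t)}\,A^T + Q \preceq A\,P_{2,(t)}\,A^T + Q = \Sigma_{2,(t+1)}$.

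There is no serious obstacle in this argument; the only point requiring care is the bookkeeping of which matrices are p.d.\ versus merely p.s.d., since the order-reversal in fact~(a) needs genuine positive definiteness of both operands — this is why the preliminary step establishing $\Sigma_{2,(t)} \in \mathbb{S}_{++}^m$ and $\Sigma_{i,(t)}^{-1} + C_i^T R_i^{-1} C_i \in \mathbb{S}_{++}^m$ must precede the monotonicity manipulations, and why Assumption~\ref{assumption:01} ($Q \succ 0$) and the hypothesis $\Sigma_{1,(t)} \succ 0$ are both indispensable.
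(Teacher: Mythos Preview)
Your proof is correct and follows essentially the same route as the paper: invert $\Sigma_{1,(t)} \preceq \Sigma_{2,(t)}$ to reverse the order, add the assumed inequality on $C_i^T R_i^{-1} C_i$, invert again to obtain $P_{1,(t)} \preceq P_{2,(t)}$, then apply the congruence $X \mapsto A X A^T$ and add $Q$ for the prediction step. The paper additionally remarks that the inequalities propagate to subsequent time instants by induction, but that observation is not needed for the statement as written, and your more explicit bookkeeping of positive definiteness (ensuring $\Sigma_{2,(t)} \succ 0$ before inverting) is, if anything, a slight improvement in rigor.
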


\begin{proof}
In this proof, we show that the conditions outlined in \eqref{eqn:c1} imply the inequalities of \eqref{eqn:c5} for any time instant $t$.

First, we show that the former inequality of \eqref{eqn:c1} implies the former inequality of \eqref{eqn:c5} for any time instant $t$, i.e.,
\begin{gather}
\Sigma_{1,(t)} \preceq \Sigma_{2,(t)} \overset{(a)}{\Leftrightarrow} \Sigma_{2,(t)}^{-1} \preceq \Sigma_{1,(t)}^{-1}   \nonumber \\
\overset{(b)}{\Leftrightarrow} \Sigma_{2,(t)}^{-1} + C_{2}^{T} R_{2}^{-1} C_{2}^{} \preceq \Sigma_{1,(t)}^{-1} + C_{1}^{T} R_{1}^{-1} C_{1}^{}   \nonumber \\
\overset{(c)}{\Leftrightarrow} P_{2,(t)}^{-1} \preceq P_{1,(t)}^{-1} \Leftrightarrow P_{1,(t)} \preceq P_{2,(t)},   \nonumber
\end{gather}
where step~(a) assumes $\Sigma_{1,(t)} \in \mathbb{S}_{++}^{m}$, step~(b) employs the latter inequality of \eqref{eqn:c1}, and step~(c) employs the filtered error covariance equation of Lemma~\ref{lemma:KF_03}. Next, we show that the former inequality of \eqref{eqn:c5} implies the latter, i.e.,
\begin{gather}
P_{1,(t)} \preceq P_{2,(t)} \overset{(d)}{\Rightarrow} A P_{1,(t)} A^T \preceq A P_{2,(t)} A^T   \nonumber \\
\Leftrightarrow A P_{1,(t)} A^T \! \! + \! Q \preceq A P_{2,(t)} A^T \! \! + \! Q \, \overset{(e)}{\Leftrightarrow} \, \Sigma_{1,(t+1)} \preceq \Sigma_{2,(t+1)},   \nonumber
\end{gather}
where step~(d) holds by a property of Hermitian matrices, explicitly referred to as the Conjugation Rule in~\cite{tropp2015introduction}, step~(e) assumes $Q \in \mathbb{S}_{++}^{m}$, and step~(f) employs the predicted error covariance equation of Lemma~\ref{lemma:KF_03}. Note that the condition $P_{1,(t)}, \Sigma_{1,(t+1)} \in \mathbb{S}_{++}^{m}$ trivially follows from the above derivation. Finally, the former and latter inequality of \eqref{eqn:c5} can be derived for subsequent time instants by induction.
\end{proof}


Next, we show how Corollary~\ref{corollary:KF_01} applies to three Kalman filters. If a third filter, denoted by subscript~$3$, is introduced, then according to Corollary~\ref{corollary:KF_01} the following assumptions,
\begin{enumerate}[ label=(\roman*) , leftmargin= 8.0mm ]
    \item $\Sigma_{1,(t)} \in \mathbb{S}_{++}^{m}$, $\Sigma_{1,(t)} \preceq \Sigma_{2,(t)} \preceq \Sigma_{3,(t)}$,
    \item $C_{3}^{T} R_{3}^{-1} C_{3}^{} \preceq C_{2}^{T} R_{2}^{-1} C_{2}^{} \preceq C_{1}^{T} R_{1}^{-1} C_{1}^{}$,
\end{enumerate}
imply $P_{1,(t)} \in \mathbb{S}_{++}^{m}$ and $P_{1,(t)} \preceq P_{2,(t)} \preceq P_{3,(t)}$. Note that we do not require the assumption $Q \in \mathbb{S}_{++}^{m}$ in the above statement since we only invoke the former inequality of \eqref{eqn:c5} of Corollary~\ref{corollary:KF_01}. We define the quantities $\Phi$, $C_{-}$, and $C_{+}$ for subsequent use in the proof, i.e., $\Phi := n_s^{1/2} \, \mathbb{E}[Z]^{1/2}$,
\begin{align}
\label{eqn:C_def}
C_{-} := (1-\epsilon)^{1/2} \, \Phi, \  C_{+} := (1+\epsilon)^{1/2} \, \Phi.
\end{align}

Next, we derive the assumptions and guarantees of Theorem~\ref{theorem:SS_bounds_t}. If we employ the following substitutions:
\begin{itemize}
    \item $R_1$, $R_2$, $R_3$ for $I_m$, $R_{\mathcal{S}}$, $I_m$, resp.,
    \item $C_1$, $C_2$, $C_3$ for $C_{+}$, $C_{\mathcal{S}}$, $C_{-}$, resp.,
    \item $\Sigma_{1,(t)}$, $\Sigma_{2,(t)}$, $\Sigma_{3,(t)}$ for $\Sigma_{L,(t)}$, $\Sigma_{\mathcal{S},(t)}$, $\Sigma_{U,(t)}$, resp.,
    \item $P_{1,(t)}$, $P_{2,(t)}$, $P_{3,(t)}$ for $P_{L,(t)}$, $P_{\mathcal{S},(t)}$, $P_{U,(t)}$, resp.,
\end{itemize}
then the assumptions of the former paragraph simplify to
\begin{enumerate}[ label=(\roman*) , leftmargin= 8.0mm ]
    \item $\Sigma_{L,(t)} \in \mathbb{S}_{++}^{m}$, $\Sigma_{L,(t)} \preceq \Sigma_{\mathcal{S},(t)} \preceq \Sigma_{U,(t)}$,
    \item $(1-\epsilon) \, n_s \, \mathbb{E}[Z] \, \preceq \, C_{\mathcal{S}}^{T} R_{\mathcal{S}}^{-1} C_{\mathcal{S}}^{} \, \preceq \, (1+\epsilon) \, n_s \, \mathbb{E}[Z]$,
\end{enumerate}
resp., and the corresponding guarantees of the former paragraph reduce to the following, $P_{L,(t)} \in \mathbb{S}_{++}^{m}$ and $P_{L,(t)} \preceq P_{\mathcal{S},(t)} \preceq P_{U,(t)}$, resp. We complete the proof by highlighting several observations. First, we satisfy the condition~(i) by assuming the predicted error covariance matrices for all three filters are equal. We denote them as $\Sigma_{(t)}$ in Theorem~\ref{theorem:SS_bounds_t} for convenience. Second, observe that the condition~(ii) cannot be guaranteed to hold in a deterministic sense since $C_{\mathcal{S}}^{T} R_{\mathcal{S}}^{-1} C_{\mathcal{S}}^{}$ is a random quantity. However, it can be guaranteed to hold in a probabilistic sense by applying Theorem~\ref{theorem:AW}. Recall from Section~\ref{subsection:ss_KF} that the quantity $C_{\mathcal{S}}^{T} R_{\mathcal{S}}^{-1} C_{\mathcal{S}}^{}$ is equivalent to a sum of i.i.d. random p.s.d. matrices. As a consequence, the event $\{ P_{L,(t)} \preceq P_{\mathcal{S},(t)} \preceq P_{U,(t)} \}$ also holds in a probabilistic sense. We denote $P_{U,(t)}$ and $P_{L,(t)}$ as $U_{S,(t)}$ and $ L_{S,(t)}$, resp., to clarify that the concentration bounds of Theorem~\ref{theorem:SS_bounds_t} correspond to a homogeneous selection.
\qed



\subsection{Proof of Theorem~\ref{theorem:SS_bounds}}

In this proof, we derive the CI for the steady-state error covariance of a homogeneous selection.

First, we derive three corollaries for subsequent use in the proof. Corollary~\ref{corollary:KF_02} directly follows from Corollary~\ref{corollary:KF_01}.

\begin{corollary}
\label{corollary:KF_02}
Assume $Q, \Sigma_{1,(0)} \in \mathbb{S}_{++}^{m}$, $\Sigma_{1,(0)} \preceq \Sigma_{2,(0)}$, and $C_{2}^{T} R_{2}^{-1} C_{2}^{} \preceq C_{1}^{T} R_{1}^{-1} C_{1}^{}$. Then, the following, $P_{1,(t)} \in \mathbb{S}_{++}^{m}$ and $P_{1,(t)} \preceq P_{2,(t)}$, hold for any time instant $t \geq 0$.
\end{corollary}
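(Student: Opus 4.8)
The plan is to establish the claim by induction on the time instant $t$, with Corollary~\ref{corollary:KF_01} supplying both the base case and the inductive step. The key observation is that the hypotheses of Corollary~\ref{corollary:KF_01} at time $t$ amount to the two conditions $\Sigma_{1,(t)} \in \mathbb{S}_{++}^{m}$ and $\Sigma_{1,(t)} \preceq \Sigma_{2,(t)}$ (together with the time-invariant data $Q \in \mathbb{S}_{++}^{m}$ and $C_{2}^{T} R_{2}^{-1} C_{2}^{} \preceq C_{1}^{T} R_{1}^{-1} C_{1}^{}$, which hold at every $t$), and that its conclusions include not only $P_{1,(t)} \in \mathbb{S}_{++}^{m}$ and $P_{1,(t)} \preceq P_{2,(t)}$ but also $\Sigma_{1,(t+1)} \in \mathbb{S}_{++}^{m}$ and $\Sigma_{1,(t+1)} \preceq \Sigma_{2,(t+1)}$. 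The latter pair is precisely the inductive hypothesis advanced one step, so the induction closes.

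Concretely, first I would record the base case: at $t=0$ we have $\Sigma_{1,(0)} \in \mathbb{S}_{++}^{m}$ and $\Sigma_{1,(0)} \preceq \Sigma_{2,(0)}$ by assumption, so Corollary~\ref{corollary:KF_01} applied with $t=0$ yields $P_{1,(0)} \in \mathbb{S}_{++}^{m}$, $P_{1,(0)} \preceq P_{2,(0)}$, $\Sigma_{1,(1)} \in \mathbb{S}_{++}^{m}$, and $\Sigma_{1,(1)} \preceq \Sigma_{2,(1)}$. For the inductive step, I would assume that for some $t \geq 0$ the predicted covariances satisfy $\Sigma_{1,(t)} \in \mathbb{S}_{++}^{m}$ and $\Sigma_{1,(t)} \preceq \Sigma_{2,(t)}$; since $Q \in \mathbb{S}_{++}^{m}$ and $C_{2}^{T} R_{2}^{-1} C_{2}^{} \preceq C_{1}^{T} R_{1}^{-1} C_{1}^{}$ hold at every time instant, Corollary~\ref{corollary:KF_01} again applies at time $t$ and delivers $P_{1,(t)} \in \mathbb{S}_{++}^{m}$, $P_{1,(t)} \preceq P_{2,(t)}$, and the refreshed conditions $\Sigma_{1,(t+1)} \in \mathbb{S}_{++}^{m}$, $\Sigma_{1,(t+1)} \preceq \Sigma_{2,(t+1)}$. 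Chaining these implications over $t = 0, 1, 2, \ldots$ proves $P_{1,(t)} \in \mathbb{S}_{++}^{m}$ and $P_{1,(t)} \preceq P_{2,(t)}$ for all $t \geq 0$.

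There is no genuinely hard step here; this corollary is a bookkeeping consequence of Corollary~\ref{corollary:KF_01}. The only points requiring mild care are to make explicit that the data $(A, Q, C_1, C_2, R_1, R_2)$ are time-invariant, so that the ``for any time instant $t$'' hypotheses of Corollary~\ref{corollary:KF_01} are met at each step using only the orderings assumed at $t=0$, and to note that the positive-definiteness and ordering of the predicted covariances at $t+1$ produced by one application are exactly what licenses the next. This is also why the assumption $Q \in \mathbb{S}_{++}^{m}$ (rather than merely $Q \in \mathbb{S}_{+}^{m}$) is kept: Corollary~\ref{corollary:KF_01} needs it to propagate $\Sigma_{1,(t+1)} \in \mathbb{S}_{++}^{m}$.
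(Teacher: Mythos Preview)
Your proposal is correct and takes essentially the same approach as the paper, which simply states that Corollary~\ref{corollary:KF_02} ``directly follows from Corollary~\ref{corollary:KF_01}.'' Your write-up merely makes the implicit induction explicit, noting that the conclusions $\Sigma_{1,(t+1)} \in \mathbb{S}_{++}^{m}$ and $\Sigma_{1,(t+1)} \preceq \Sigma_{2,(t+1)}$ from one application of Corollary~\ref{corollary:KF_01} feed the hypotheses of the next; indeed, the proof of Corollary~\ref{corollary:KF_01} itself already remarks that the inequalities propagate to subsequent time instants by induction.
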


If the pairs $(A,C_1)$ and $(A,C_2)$ are detectable, then the matrices $P_{1,(t)}$ and $P_{2,(t)}$ of Corollary~\ref{corollary:KF_02} converge to the unique p.d. steady-state solutions $P_1$ and $P_2$, resp., as a consequence of Lemma~\ref{lemma:KF_03}. The steady-state solutions satisfy the inequality $P_{1} \preceq P_{2}$ as a consequence of Corollary~\ref{corollary:KF_02}. Since the steady-state solutions $P_1$ and $P_2$ are unique for any matrices $\Sigma_{1,(0)}$ and $\Sigma_{2,(0)}$ in the set $\mathbb{S}_{++}^{m}$, resp., the assumption $\Sigma_{1,(0)} \preceq \Sigma_{2,(0)}$ of Corollary~\ref{corollary:KF_02} is not required to guarantee $P_{1} \preceq P_{2}$. We summarize the conclusions of this paragraph in Corollary~\ref{corollary:KF_03}. By following a proof similar to that of Corollary~\ref{corollary:KF_01}, we obtain Corollary~\ref{corollary:KF_04}, an alternate formulation of Corollary~\ref{corollary:KF_03}.

\begin{corollary}
\label{corollary:KF_03}
Assume $(A,C_1)$ and $(A,C_2)$ are detectable. Suppose $Q \in \mathbb{S}_{++}^{m}$, $\Sigma_{1,(0)}, \Sigma_{2,(0)} \in \mathbb{S}_{++}^{m}$, and $C_{2}^{T} R_{2}^{-1} C_{2}^{} \preceq C_{1}^{T} R_{1}^{-1} C_{1}^{}$. Then, $P_{1}  \in \mathbb{S}_{++}^{m}$ and $P_{1} \preceq P_{2}$.
\end{corollary}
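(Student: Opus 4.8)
The plan is to combine the finite-horizon comparison of Corollary~\ref{corollary:KF_02} with the convergence-and-uniqueness statement of Lemma~\ref{lemma:KF_03}, and then pass to the limit. Since the corollary to be proved places no ordering on the initial predicted covariances, the first step is to introduce a convenient initialization: pick any $\Sigma_{(0)} \in \mathbb{S}_{++}^{m}$ and run both filters from $\Sigma_{1,(0)} = \Sigma_{2,(0)} = \Sigma_{(0)}$, so that the hypothesis $\Sigma_{1,(0)} \preceq \Sigma_{2,(0)}$ of Corollary~\ref{corollary:KF_02} holds trivially. Together with $Q \in \mathbb{S}_{++}^{m}$ and $C_{2}^{T} R_{2}^{-1} C_{2}^{} \preceq C_{1}^{T} R_{1}^{-1} C_{1}^{}$, Corollary~\ref{corollary:KF_02} then yields $P_{1,(t)} \in \mathbb{S}_{++}^{m}$ and $P_{1,(t)} \preceq P_{2,(t)}$ for every $t \geq 0$.

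Next I would invoke Lemma~\ref{lemma:KF_03}: because $(A,C_i)$ is detectable and $Q \in \mathbb{S}_{++}^{m}$, each sequence $P_{i,(t)}$ converges to the \emph{unique} p.d. steady-state solution $P_i \in \mathbb{S}_{++}^{m}$, and this limit is independent of the chosen initial condition in $\mathbb{S}_{++}^{m}$. Hence the $P_1, P_2$ produced by the artificial initialization above coincide with the steady-state covariances appearing in the statement of Corollary~\ref{corollary:KF_03}. Passing to the limit $t \to \infty$ in $P_{1,(t)} \preceq P_{2,(t)}$ and using that $\mathbb{S}_{+}^{m}$ is a closed cone gives $P_1 \preceq P_2$; the inclusion $P_1 \in \mathbb{S}_{++}^{m}$ is already supplied by Lemma~\ref{lemma:KF_03}.

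The step that makes the argument work — and the only nontrivial point — is the uniqueness-of-steady-state clause of Lemma~\ref{lemma:KF_03}: without it one could not replace the unordered initial conditions of the corollary by an ordered pair and still say anything about the original $P_1, P_2$. Everything else is routine: the comparison at each finite time is Corollary~\ref{corollary:KF_02}, and closedness of the p.s.d.\ cone preserves $\preceq$ under limits. I would close by noting, as the surrounding text already indicates, that this is precisely why the ordering assumption $\Sigma_{1,(0)} \preceq \Sigma_{2,(0)}$ needed in Corollary~\ref{corollary:KF_02} can be dropped at the steady-state level.
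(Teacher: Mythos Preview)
Your proposal is correct and follows essentially the same route as the paper: combine the finite-time ordering from Corollary~\ref{corollary:KF_02} with the convergence and uniqueness supplied by Lemma~\ref{lemma:KF_03}, then use uniqueness of the steady-state solution to discard the ordering hypothesis on the initial covariances. The only cosmetic difference is that you explicitly set $\Sigma_{1,(0)} = \Sigma_{2,(0)}$, whereas the paper simply notes that uniqueness renders the assumption $\Sigma_{1,(0)} \preceq \Sigma_{2,(0)}$ unnecessary.
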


\begin{corollary}
\label{corollary:KF_04}
Assume $(A,C_1)$ and $(A,C_2)$ are detectable. Suppose $Q \in \mathbb{S}_{++}^{m}$. Let $P_1$ and $P_2$ denote the unique p.d. steady-state solution to the following,
\begin{align*}
P_{1,(t+1)} = f_2( \, P_{1,(t)} \, , \, C_{1}^{T} R_{1}^{-1} C_{1} \, ), \\
P_{2,(t+1)} = f_2( \, P_{2,(t)} \, , \, C_{2}^{T} R_{2}^{-1} C_{2} \, )
\end{align*}
resp., such that $P_{1,(0)}, P_{2,(0)} \in \mathbb{S}_{++}^{m}$. Assume $C_{2}^{T} R_{2}^{-1} C_{2}^{} \preceq C_{1}^{T} R_{1}^{-1} C_{1}^{}$. Then, $P_{1} \in \mathbb{S}_{++}^{m}$ and $P_{1} \preceq P_{2}$.
\end{corollary}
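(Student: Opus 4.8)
The plan is to argue exactly as in the proof of Corollary~\ref{corollary:KF_01}, namely by a monotone induction on the recursion, except that the recursion is now carried out with the single map $f_2$ of Definition~\ref{def:functions} rather than with the predicted/filtered pair $(f_3,f_4)$. This is precisely what makes Corollary~\ref{corollary:KF_04} the $f_2$-formulation of Corollary~\ref{corollary:KF_03}, in the same way that Corollary~\ref{eqn:corollary:KF_01} is the $f_2$-formulation of Corollary~\ref{corollary:KF_00}.

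First I would record existence, uniqueness and positive definiteness of the two steady states. For each $i \in \{1,2\}$, composing the two maps of Lemma~\ref{lemma:KF_03} at time $t+1$ gives the algebraic identity $f_2(\Lambda, C_i^T R_i^{-1} C_i) = f_3(f_4(\Lambda), C_i, R_i)$, so the recursion in the statement is precisely the filtered-covariance recursion of Lemma~\ref{lemma:KF_03}. Since $(A,C_i)$ is detectable and $Q \in \mathbb{S}_{++}^m$, that lemma (equivalently Corollary~\ref{corollary:KF_00}, and as already transported to the $f_2$ form in Corollary~\ref{eqn:corollary:KF_01}) yields convergence of $P_{i,(t)}$ to the unique p.d. steady-state solution $P_i \in \mathbb{S}_{++}^m$ for any p.d. initialization; in particular $P_1 \in \mathbb{S}_{++}^m$.

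For the ordering $P_1 \preceq P_2$, because the steady state does not depend on the (p.d.) initialization I would take $P_{1,(0)} = P_{2,(0)}$ and prove $P_{1,(t)} \preceq P_{2,(t)}$ for all $t$ by induction. The induction step uses that $f_2$ is order-preserving in its first argument — the chain $\Lambda \mapsto A\Lambda A^T + Q \mapsto (\,\cdot\,)^{-1} \mapsto (\,\cdot\,) + \Theta \mapsto (\,\cdot\,)^{-1}$ is a composition of order-preserving and order-reversing maps with an even number of reversals, with $Q \succ 0$ keeping every intermediate term p.d. — and order-reversing in its second argument. Hence, from $P_{1,(t)} \preceq P_{2,(t)}$ together with the hypothesis $C_2^T R_2^{-1} C_2 \preceq C_1^T R_1^{-1} C_1$, one gets $P_{1,(t+1)} = f_2(P_{1,(t)}, C_1^T R_1^{-1} C_1) \preceq f_2(P_{1,(t)}, C_2^T R_2^{-1} C_2) \preceq f_2(P_{2,(t)}, C_2^T R_2^{-1} C_2) = P_{2,(t+1)}$. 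Letting $t \to \infty$ and using closedness of $\mathbb{S}_{+}^m$ gives $P_1 \preceq P_2$. Equivalently, one may simply invoke Corollary~\ref{corollary:KF_03} after matching the two recursions through the same $f_2 = f_3 \circ f_4$ identity.

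I do not expect a genuine obstacle here: the only point that needs a word of care is the convergence claim ``for any p.d. initialization'', which is marginally stronger than the literal wording of Lemma~\ref{lemma:KF_03} (phrased in terms of the predicted covariance $\Sigma_{i,(0)}$) but is standard Riccati theory and is exactly the claim already made in Corollary~\ref{eqn:corollary:KF_01}. Apart from that, Corollary~\ref{corollary:KF_04} is a routine repackaging of Corollary~\ref{corollary:KF_03} together with the $f_2$-collapse.
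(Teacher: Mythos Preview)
Your proposal is correct and follows essentially the same route as the paper, which simply states that Corollary~\ref{corollary:KF_04} is obtained ``by following a proof similar to that of Corollary~\ref{corollary:KF_01}'' and that it is ``an alternate formulation of Corollary~\ref{corollary:KF_03}.'' Your monotonicity induction on $f_2$ is exactly the $f_2$-collapse of the two-step $(f_3,f_4)$ argument in Corollary~\ref{corollary:KF_01}, and your closing remark that one may instead invoke Corollary~\ref{corollary:KF_03} via the identity $f_2 = f_3\circ f_4$ matches the paper's framing verbatim.
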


Next, we show how Corollary~\ref{corollary:KF_04} applies to three Kalman filters. We assume $Q \in \mathbb{S}_{++}^{m}$ from this point onward in the proof. If a third filter, denoted by subscript~$3$, is introduced, then according to Corollary~\ref{corollary:KF_04} the following assumptions,
\begin{enumerate}[ label=(\roman*) , leftmargin= 8.0mm ]
    \item $(A,C_1)$, $(A,C_2)$, and $(A,C_3)$ are detectable,
    \item $C_{3}^{T} R_{3}^{-1} C_{3}^{} \preceq C_{2}^{T} R_{2}^{-1} C_{2}^{} \preceq C_{1}^{T} R_{1}^{-1} C_{1}^{}$,
    \item $P_{1,(0)}, P_{2,(0)}, P_{3,(0)} \in \mathbb{S}_{++}^{m}$,
\end{enumerate}
imply $P_{1} \in \mathbb{S}_{++}^{m}$ and $P_{1} \preceq P_{2} \preceq P_{3}$.

Next, we derive the assumptions and guarantees of Theorem~\ref{theorem:SS_bounds}. If we employ the following substitutions:
\begin{itemize}
    \item $R_1$, $R_2$, $R_3$ for $I_m$, $R_{\mathcal{S}}$, $I_m$, resp.,
    \item $C_1$, $C_2$, $C_3$ for $C_{+}$, $C_{\mathcal{S}}$, $C_{-}$, resp.,
    \item $P_{1}$, $P_{2}$, $P_{3}$ for $P_{L}$, $P_{\mathcal{S}}$, $P_{U}$, resp.,
\end{itemize}
then the assumptions of the former paragraph simplify to
\begin{enumerate}[ label=(\roman*) , leftmargin= 8.0mm ]
    \item $(A,C_{-})$, $(A,C_{\mathcal{S}})$, and $(A,C_{+})$ are detectable,
    \item $(1-\epsilon) \, n_s \, \mathbb{E}[Z] \, \preceq \, C_{\mathcal{S}}^{T} R_{\mathcal{S}}^{-1} C_{\mathcal{S}}^{} \, \preceq \, (1+\epsilon) \, n_s \, \mathbb{E}[Z]$,
    \item $P_{L,(0)}, P_{\mathcal{S},(0)}, P_{U,(0)} \in \mathbb{S}_{++}^{m}$,
\end{enumerate}
resp., and the corresponding guarantees of the former paragraph reduce to the following, $P_{L} \in \mathbb{S}_{++}^{m}$ and $P_{L} \preceq P_{\mathcal{S}} \preceq P_{U}$, resp. Recall that $C_{-}$ and $C_{+}$ are defined in \eqref{eqn:C_def}. We complete the proof by highlighting several observations. First, observe that the pairs $(A,C_{-})$ and $(A,C_{+})$ are detectable if and only if the pair $(A,\mathbb{E}[Z]^{1/2})$ is detectable. We state the latter detectability condition over the former in Theorem~\ref{theorem:SS_bounds}. Second, observe that the condition~(ii) cannot be guaranteed to hold in a deterministic sense since $C_{\mathcal{S}}^{T} R_{\mathcal{S}}^{-1} C_{\mathcal{S}}^{}$ is a random quantity. However, it can be guaranteed to hold in a probabilistic sense by applying Theorem~\ref{theorem:AW}. Recall from Section~\ref{subsection:ss_KF} that the quantity $C_{\mathcal{S}}^{T} R_{\mathcal{S}}^{-1} C_{\mathcal{S}}^{}$ is equivalent to a sum of i.i.d. random p.s.d. matrices. As a consequence, the event $\{ P_{L} \preceq P_{\mathcal{S}} \preceq P_{U} \}$ also holds in a probabilistic sense. We denote $P_{U}$ and $P_{L}$ as $U_{S}$ and $ L_{S}$, resp., to clarify that the concentration bounds of Theorem~\ref{theorem:SS_bounds} correspond to a homogeneous selection.
\qed






\bibliographystyle{IEEEtran} 
\bibliography{references}

\begin{thebibliography}{10}
\providecommand{\url}[1]{#1}
\csname url@samestyle\endcsname
\providecommand{\newblock}{\relax}
\providecommand{\bibinfo}[2]{#2}
\providecommand{\BIBentrySTDinterwordspacing}{\spaceskip=0pt\relax}
\providecommand{\BIBentryALTinterwordstretchfactor}{4}
\providecommand{\BIBentryALTinterwordspacing}{\spaceskip=\fontdimen2\font plus
\BIBentryALTinterwordstretchfactor\fontdimen3\font minus
  \fontdimen4\font\relax}
\providecommand{\BIBforeignlanguage}[2]{{%
\expandafter\ifx\csname l@#1\endcsname\relax
\typeout{** WARNING: IEEEtran.bst: No hyphenation pattern has been}%
\typeout{** loaded for the language `#1'. Using the pattern for}%
\typeout{** the default language instead.}%
\else
\language=\csname l@#1\endcsname
\fi
#2}}
\providecommand{\BIBdecl}{\relax}
\BIBdecl

\bibitem{van2001}
M.~Van De~Wal and B.~De~Jager, ``A review of methods for input/output
  selection,'' \emph{Automatica}, vol.~37, no.~4, pp. 487--510, 2001.

\bibitem{taha2018time}
A.~F. Taha, N.~Gatsis, T.~Summers, and S.~A. Nugroho, ``Time-varying sensor and
  actuator selection for uncertain cyber-physical systems,'' \emph{IEEE
  Transactions on Control of Network Systems}, vol.~6, no.~2, pp. 750--762,
  2018.

\bibitem{siami2020deterministic}
M.~Siami, A.~Olshevsky, and A.~Jadbabaie, ``Deterministic and randomized
  actuator scheduling with guaranteed performance bounds,'' \emph{IEEE
  Transactions on Automatic Control}, vol.~66, no.~4, pp. 1686--1701, 2020.

\bibitem{krause2014submodular}
A.~Krause and D.~Golovin, ``Submodular function maximization,''
  \emph{Tractability}, vol.~3, pp. 71--104, 2014.

\bibitem{iwata2008submodular}
S.~Iwata, ``Submodular function minimization,'' \emph{Mathematical
  Programming}, vol. 112, no.~1, pp. 45--64, 2008.

\bibitem{clark2016submodularity}
A.~Clark, B.~Alomair, L.~Bushnell, and R.~Poovendran, \emph{Submodularity in
  dynamics and control of networked systems}.\hskip 1em plus 0.5em minus
  0.4em\relax Springer, 2016.

\bibitem{jawaid2015submodularity}
S.~T. Jawaid and S.~L. Smith, ``Submodularity and greedy algorithms in sensor
  scheduling for linear dynamical systems,'' \emph{Automatica}, vol.~61, pp.
  282--288, 2015.

\bibitem{zhang2017sensor}
H.~Zhang, R.~Ayoub, and S.~Sundaram, ``{Sensor selection for Kalman filtering
  of linear dynamical systems: Complexity, limitations and greedy
  algorithms},'' \emph{Automatica}, vol.~78, pp. 202--210, 2017.

\bibitem{nemhauser1978analysis}
G.~L. Nemhauser, L.~A. Wolsey, and M.~L. Fisher, ``{An analysis of
  approximations for maximizing submodular set functions—I},''
  \emph{Mathematical programming}, vol.~14, no.~1, pp. 265--294, 1978.

\bibitem{kohara2020sensor}
A.~Kohara, K.~Okano, K.~Hirata, and Y.~Nakamura, ``Sensor placement minimizing
  the state estimation mean square error: Performance guarantees of greedy
  solutions,'' in \emph{2020 59th IEEE Conference on Decision and Control
  (CDC)}.\hskip 1em plus 0.5em minus 0.4em\relax IEEE, 2020, pp. 1706--1711.

\bibitem{hartman2019near}
D.~Hartman and J.~S. Baras, ``{Near-optimal solution to the non-uniform
  sampling problem in Kalman filtering},'' in \emph{2019 IEEE 58th Conference
  on Decision and Control (CDC)}.\hskip 1em plus 0.5em minus 0.4em\relax IEEE,
  2019, pp. 6404--6411.

\bibitem{chamon2020approximate}
L.~F. Chamon, G.~J. Pappas, and A.~Ribeiro, ``{Approximate supermodularity of
  Kalman filter sensor selection},'' \emph{IEEE Transactions on Automatic
  Control}, vol.~66, no.~1, pp. 49--63, 2020.

\bibitem{hashemi2021randomized}
A.~Hashemi, M.~Ghasemi, H.~Vikalo, and U.~Topcu, ``Randomized greedy sensor
  selection: Leveraging weak submodularity,'' \emph{IEEE Transactions on
  Automatic Control}, vol.~66, no.~1, pp. 199--212, 2021.

\bibitem{mirzasoleiman2015lazier}
B.~Mirzasoleiman, A.~Badanidiyuru, A.~Karbasi, J.~Vondr{\'a}k, and A.~Krause,
  ``Lazier than lazy greedy,'' in \emph{Proceedings of the AAAI Conference on
  Artificial Intelligence}, vol.~29, no.~1, 2015.

\bibitem{han2021stochastic}
L.~Han, M.~Li, D.~Xu, and D.~Zhang, ``Stochastic-lazier-greedy algorithm for
  monotone non-submodular maximization,'' \emph{Journal of Industrial \&
  Management Optimization}, vol.~17, no.~5, p. 2607, 2021.

\bibitem{bopardikar2019randomized}
S.~D. Bopardikar, O.~Ennasr, and X.~Tan, ``Randomized sensor selection for
  nonlinear systems with application to target localization,'' \emph{IEEE
  Robotics and Automation Letters}, vol.~4, no.~4, pp. 3553--3560, 2019.

\bibitem{bopardikar2021randomized}
S.~D. Bopardikar, ``A randomized approach to sensor placement with
  observability assurance,'' \emph{Automatica}, vol. 123, p. 109340, 2021.

\bibitem{ahlswede2002strong}
R.~Ahlswede and A.~Winter, ``Strong converse for identification via quantum
  channels,'' \emph{IEEE Transactions on Information Theory}, vol.~48, no.~3,
  pp. 569--579, 2002.

\bibitem{calle2021probabilistic}
C.~I. Calle and S.~D. Bopardikar, ``{Probabilistic performance bounds for
  randomized sensor selection in Kalman filtering},'' in \emph{2021 American
  Control Conference (ACC)}.\hskip 1em plus 0.5em minus 0.4em\relax IEEE, 2021,
  pp. 4395--4400.

\bibitem{simon2006optimal}
D.~Simon, \emph{Optimal state estimation: Kalman, H infinity, and nonlinear
  approaches}.\hskip 1em plus 0.5em minus 0.4em\relax John Wiley \& Sons, 2006.

\bibitem{qiu2014sums}
R.~Qiu and M.~Wicks, ``Sums of matrix-valued random variables,'' in
  \emph{Cognitive Networked Sensing and Big Data}.\hskip 1em plus 0.5em minus
  0.4em\relax Springer, 2014, pp. 85--144.

\bibitem{vandenberghe2005interior}
L.~Vandenberghe, V.~R. Balakrishnan, R.~Wallin, A.~Hansson, and T.~Roh,
  ``Interior-point algorithms for semidefinite programming problems derived
  from the kyp lemma,'' \emph{Positive polynomials in control}, pp. 195--238,
  2005.

\bibitem{anderson2012optimal}
B.~D. Anderson and J.~B. Moore, \emph{Optimal filtering}.\hskip 1em plus 0.5em
  minus 0.4em\relax Courier Corporation, 2012.

\bibitem{tropp2015introduction}
J.~A. Tropp, ``An introduction to matrix concentration inequalities,''
  \emph{Foundations and Trends{\textregistered} in Machine Learning}, vol.~8,
  no. 1-2, pp. 1--230, 2015.

\end{thebibliography}

\end{document}